\newtheorem{theorem}{Theorem}
\newtheorem{proposition}{Proposition}
\newtheorem{lemma}{Lemma}
\newcommand{\tw}{{\mathbf{tw}}}
\newcommand{\adj}{\textrm{adj}}
\newcommand{\inc}{\textrm{inc}}
\newcommand{\card}{\textrm{card}}
\DeclareMathOperator{\operatorClassP}{P}
\newcommand{\classP}{\ensuremath{\operatorClassP}}
\DeclareMathOperator{\operatorClassNP}{NP}
\newcommand{\classNP}{\ensuremath{\operatorClassNP}}
\DeclareMathOperator{\operatorClassCoNP}{coNP}
\newcommand{\classCoNP}{\ensuremath{\operatorClassCoNP}}
\DeclareMathOperator{\operatorClassFPT}{FPT}
\newcommand{\classFPT}{\ensuremath{\operatorClassFPT}}
\DeclareMathOperator{\operatorClassW}{W}
\newcommand{\classW}[1]{\ensuremath{\operatorClassW[#1]}}
\DeclareMathOperator{\operatorClassParaNP}{Para-NP}
\newcommand{\classParaNP}{\ensuremath{\operatorClassParaNP}}
\DeclareMathOperator{\operatorClassXP}{XP}
\newcommand{\classXP}{\ensuremath{\operatorClassXP}}
\newcommand{\defparproblem}[4]{
  \vspace{3mm}
\noindent\fbox{
  \begin{minipage}{0.96\textwidth}
  \begin{tabular*}{\textwidth}{@{\extracolsep{\fill}}lr} #1  & {\bf{Parameter:}} #3 \\ \end{tabular*}
  {\bf{Input:}} #2  \\
  {\bf{Question:}} #4
  \end{minipage}
  }
  \vspace{3mm}
}
\begin{document}

\title{Long Circuits and Large  Euler Subgraphs\footnote{Supported by 
  the European Research Council (ERC) via grant Rigorous Theory of Preprocessing, reference 267959.}}

\author{
Fedor V. Fomin\thanks{Department of Informatics, University of Bergen, PB 7803, 5020
Bergen, Norway, \texttt{\{fomin,petr.golovach\}@ii.uib.no}.}
\addtocounter{footnote}{-1}
\and 
Petr A. Golovach\footnotemark
}

\date{}

\maketitle

\begin{abstract}
An undirected graph   is Eulerian if it is connected and all its vertices are of even degree. Similarly, a directed graph is Eulerian, if for each vertex its in-degree is equal to its out-degree. It is well known that  Eulerian graphs can be recognized  in polynomial time while 
the problems of finding a maximum Eulerian subgraph or a maximum induced Eulerian subgraph are NP-hard.
In this paper, we study the parameterized complexity of the following   Euler subgraph problems: 

\begin{itemize}
\item \textsc{Large Euler Subgraph}: For a given graph $G$ and  integer  parameter $k$, does $G$ contain an induced Eulerian subgraph with at least $k$ vertices? 
\item   \textsc{Long Circuit}: For a given graph $G$ and  integer  parameter $k$, does $G$   contain an  Eulerian subgraph with at least $k$ edges? 
\end{itemize} 

Our main algorithmic result is that 
\textsc{Large Euler Subgraph} is fixed parameter tractable (FPT) on undirected graphs. We find this  a bit surprising because the problem of finding an induced Eulerian subgraph with exactly $k$ vertices is known to be W[1]-hard. The complexity of the problem changes drastically on directed graphs. On directed graphs we obtained the following complexity  dichotomy:  
  \textsc{Large Euler Subgraph} is NP-hard for every fixed $k>3$ and is solvable in polynomial time for $k\leq 3$. 
  For \textsc{Long Circuit}, we prove that the problem is FPT on directed and undirected graphs.
\end{abstract}

\section{Introduction} 
One of the oldest theorems in Graph Theory is attributed to Euler, and it says that a (undirected) graph admits an \emph{Euler circuit}, i.e.,  a closed walk visiting every edge exactly once, if and only if a graph is connected and all its vertices are of even degrees. Respectively, a directed graph has a \emph{directed Euler circuit} if and only if a graph is (weakly) connected  and for each vertex, its in-degree is equal to its  out-degree. 
While checking if a given directed or undirected graph is Eulerian  is easily done in polynomial time, the problem of finding $k$ edges (arcs) in a graph to form an Eulerian subgraph is NP-hard. We refer to the book of  Fleischner \cite{Fleischner90} for a thorough study of Eulerian graphs and related topics. 
 
In \cite{CaiY11}, Cai and Yang initiated the study of parameterized complexity of subgraph problems motivated by Eulerian graphs. 
Particularly, they considered the following
parameterized subgraph and induced subgraph problems:

\defparproblem{{\textsc{$k$-Circuit}}}{A (directed) graph $G$ and non-negative integer $k$}{$k$}{Does  $G$ contain a circuit  with $k$ edges (arcs)?}

and

\defparproblem{{\textsc{Euler $k$-Subgraph}}}{A (directed) graph $G$ and non-negative integer $k$}{$k$}{Does  $G$ contain an induced Euler subgraph with $k$ vertices?}

The decision versions of both {\sc $k$-Circuit} and  
{\sc Euler $k$-Subgraph}   are known to be  the \classNP-complete~\cite{CaiY11}.
 Cai and Yang in ~\cite{CaiY11} proved that {\sc $k$-Circuit} on undirected graphs is \classFPT. On the other hand, the authors have  shown in  \cite{FominG12} 
that  {\sc Euler $k$-Subgraph} is \classW{1}-hard.   The variant of the problem  {\sc $(m-k)$-Circuit}, where one asks to remove at most $k$ edges to obtain an Eulerian subgraph  was shown to be     \classFPT \,  by Cygan et al. \cite{CyganMPPS11} 
on directed and undirected graphs. The problem of removing at most $k$ vertices to obtain an induced Eulerian subgraph, namely  {\sc Euler $(n-k)$-Subgraph},  was shown to be  \classW{1}-hard  by Cai and Yang for undirected graphs  \cite{CaiY11} and by Cygan et al. for directed graphs
\cite{CyganMPPS11}. Dorn et al. in \cite{DornMNW13} provided FPT algorithms for the weighted version of Eulerian extension.

\medskip
In this work we extend the set of results on the parameterized complexity of Eulerian subgraph problems by considering the problems of finding an (induced) Eulerian subgraph with \emph{at least} $k$ (vertices) edges. 
 We consider the following problems:

 \defparproblem{{\textsc{Large Euler Subgraph}}}{A (directed) graph $G$ and non-negative integer $k$}{$k$}{Does  $G$ contain  an induced Euler subgraph with  at least $k$ vertices?}
 
and

 \defparproblem{{\textsc{Long Circuit}}}{A (directed) graph $G$ and non-negative integer $k$}{$k$}{Does  $G$ contain a circuit  with at least $k$ edges (arcs)?}

The decision version of  {\sc Long Circuit}  was shown to be  \classNP-complete
by Cygan et al. in~\cite{CyganMPPS11} and it is not difficult to see
that same is true for {\sc Large Euler Subgraph}.   Let us note that by plugging-in these observations into  the framework of   Bodlaender et al.~\cite{BodlaenderDFH09}, it is easy to conclude that on 
undirected graphs both problems have no polynomial kernels unless $\classNP\subseteq\classCoNP/\text{\rm poly}$.

However, the parameterized complexity of these problems appears to be much more interesting. 

\medskip\noindent\textbf{Our results.} We show that 
 {\sc Large Euler Subgraph} behaves differently for directed and undirected cases. For undirected graphs, we prove that 
the problem is \classFPT. We find this result surprising, because very closely related \textsc{Euler $k$-Subgraph} is known to be  \classW{1}-hard \cite{FominG12}.
The proof  is based on a structural result interesting in its own. Roughly speaking, we show that large  treewidth certifies containment of a  large induced Euler subgraph.  For directed graphs,   {\sc Large Euler Subgraph} is \classNP-complete for each  $k\geq 4$, and this bound is tight---the problem is polynomial-time solvable for each $k\leq 3$.
 We also prove that \textsc{Euler $k$-Subgraph} is \classW{1}-hard for directed graphs.
{\sc Long Circuit} is proved to be \classFPT~ for directed and undirected graphs. Our algorithm is based on the results by Gabow and Nie~\cite{GabowN08} about the parameterized complexity of finding long cycles. The known and new results about  Euler subgraph problems are summarized in  Table~\ref{tabl:compl}.

\begin{table}[ht]
\begin{center}
{\small
\begin{tabular}{|c|c|c|}
\hline
& Undirected  & Directed \\
\hline
{\sc $k$-Circuit} & \classFPT~\cite{CaiY11} & \classFPT, Prop.~\ref{prop:exact}\\
\hline
{\sc Euler $k$-Subgraph} &  \classW{1}-hard~\cite{FominG12}&  \classW{1}-hard, Thm~\ref{thm:w-hard}\\
\hline 
{\sc $(m-k)$-Circuit} &  \classFPT~ \cite{CyganMPPS11} &   \classFPT~ \cite{CyganMPPS11} \\
\hline 
 {\sc Euler $(n-k)$-Subgraph} &  \classW{1}-hard~\cite{CaiY11} &    \classW{1}-hard~\cite{CyganMPPS11} \\
\hline
{\sc Long Circuit} & \classFPT, Thm~\ref{thm:fpt-edge} & \classFPT, Thm~\ref{thm:fpt-edge}\\
\hline
{\sc Large Euler Subgraph} & \classFPT, Thm~\ref{thm:fpt-vertex} &  \classNP-complete for any $k\geq 4$,   \\
                                         &                                                           &   Thm~\ref{thm:paranp};  in \classP \, for $k\leq 3$  \\                                 
\hline 
\end{tabular}
\caption{Parameterized complexity of  Euler subgraph problems.}\label{tabl:compl}
}
\end{center}
\end{table}

This paper is organised as follows. Section~\ref{sec:defs} contains basic definitions and preliminaries. In Section~\ref{sec:vertex-undir} we show that 
 {\sc Large Euler Subgraph} is FPT on undirected graphs. In Section~\ref{sec:vertex-dir} we prove that on directed graphs, {\sc Euler $k$-Subgraph}
is W[1]-hard while {\sc Large Euler Subgraph}  is NP-complete for each $k\geq 4$. In Section~\ref{sec:edge} we treat  {\sc Long Curcuit} and show that it is FPT on directed and undirected graphs. 

\section{Basic definitions and preliminaries}\label{sec:defs}

\noindent
{\bf Graphs.}
We consider finite directed and undirected graphs without loops or multiple
edges. The vertex set of a (directed) graph $G$ is denoted by $V(G)$, 
the edge set of an undirected graph and the arc set of a directed graph $G$ is denoted by $E(G)$.
To distinguish edges and arcs, the edge with two end-vertices $u,v$  is denoted by $\{u,v\}$, and we write $(u,v)$ for the corresponding arc.
For a set of vertices $S\subseteq V(G)$,
$G[S]$ denotes the subgraph of $G$ induced by $S$, and by $G-S$ we denote the graph obtained form $G$ by the removal of all the vertices of $S$, i.e., the subgraph of $G$ induced by $V(G)\setminus S$. 
Let $G$ be an undirected graph. 
For a vertex $v$, we denote by $N_G(v)$ its
\emph{(open) neighborhood}, that is, the set of vertices which are
adjacent to $v$. 
The \emph{degree} of a vertex $v$ is denoted by $d_G(v)=|N_G(v)|$, and $\Delta(G)$ is the maximum degree of $G$. 
Let now $G$ be a directed graph.
For a vertex $v\in V(G)$, we say that $u$ is an \emph{in-neighbor} of $v$ if $(u,v)\in E(G)$. The set of all in-neighbors of $v$ is denoted by $N_G^-(v)$. The \emph{in-degree} $d_G^-(v)=|N_G^-(v)|$.
Respectively, $u$ is an  \emph{out-neighbor} of $v$ if $(v,u)\in E(G)$, the set of all out-neighbors of $v$ is denoted by $N_G^+(v)$, and the \emph{out-degree} $d_G^+(v)=|N_G^+(v)|$.

For a (directed) graph $G$, a (directed) \emph{trail} of \emph{length} $k$ is a sequence $v_0,e_1,v_1,e_2,\ldots,e_k,v_k$ of vertices and edges (arcs resp.) of $G$ such that $v_0,\ldots,v_k\in V(G)$, $e_1,\ldots,e_k\in E(G)$,
the edges (arcs resp.) $e_1,\ldots,e_k$ are pairwise distinct, and 
for $i\in\{1,\ldots,k\}$, $e_i=\{v_{i-1},v_i\}$ ($e_i=(v_{i-1},v_i)$ resp.). 
A trail is said to be \emph{closed} if $v_0=v_k$. 
A closed (directed) trail is called a (directed) \emph{circuit}, 
and it is a (directed) \emph{cycle} if all its vertices except $v_0=v_k$ are distinct.
Clearly, any cycle is a subgraph of $G$, and it is said that $C$ is an \emph{induced cycle} of $G$ if $C=G[V(C)]$. A (directed) path is a trail such that all its vertices  are distinct. 
For a (directed) walk (trail, path resp.) $v_0,e_1,v_1,e_2,\ldots,e_k,v_k$, $v_0$ and $v_k$ are its \emph{end-vertices}, and $v_1,\ldots,v_{k-1}$ are its \emph{internal} vertices. For a 
a (directed) walk (trail, path resp.) with end-vertices $u$ and $v$, we say that it is an \emph{$(u,v)$-walk} (\emph{trail}, \emph{path} resp.).
We omit the word ``directed'' if it does not create a confusion. Also we write a trail as a sequence of its vertices $v_0,\ldots,v_k$.

A connected (directed) graph $G$ is an \emph{Euler} (or \emph{Eulerian}) graph if it has a (directed) circuit that contains all edges (arcs resp.) of $G$. By the famous result by Euler (see, e.g., \cite{Fleischner90}), a connected graph $G$ is an Euler graph if and only if all its vertices have even degrees. Respectively, a connected directed graph $G$ is an Euler directed graph if and only if for each vertex $v\in V(G)$, $d_G^-(v)=d_G^+(v)$.

\medskip
\noindent
{\bf Ramsey numbers.} The \emph{Ramsey number} $R(r,s)$ is the minimal integer $n$ such that any graph on $n$ vertices has either a clique of size $r$ or an independent set of size $s$.
By the famous paper by Erd{\"o}s and Szekeres~\cite{ErdosS35}, 
$R(r,s)\leq \binom{r+s-2}{r-1}$.

\medskip
\noindent
{\bf Parameterized Complexity.}
Parameterized complexity is a two dimensional framework
for studying the computational complexity of a problem. One dimension is the input size
$n$ and another one is a parameter $k$. It is said that a problem is \emph{fixed parameter tractable} (or \classFPT), if it can be solved in time $f(k)\cdot n^{O(1)}$ for some function $f$, 
and
it is said that a problem is in  \classXP, if it can be solved in time $O(n^{f(k)})$ for some function $f$.
One of the basic assumptions of the Parameterized Complexity theory is the conjecture that the complexity class $\classW{1}\neq \classFPT$, and it is unlikely that a \classW{1}-hard problem could be solved in \classFPT-time. A problem is \emph{\classParaNP-hard}(\emph{complete}) if it is \classNP-hard (complete) for some fixed value of the parameter $k$.
Clearly, a \classParaNP-hard problem is not in \classXP~ unless \classP$=$\classNP.
We refer to the books of Downey and Fellows~\cite{DowneyF99}, 
Flum and Grohe~\cite{FlumG06}, and   Niedermeier~\cite{Niedermeierbook06} for  detailed introductions  to parameterized complexity. 

\medskip
\noindent
{\bf Treewidth.} 
A \emph{tree decomposition} of a graph $G$ is a pair $(X,T)$ where $T$
is a tree and $X=\{X_{i} \mid i\in V(T)\}$ is a collection of subsets (called {\em bags})
of $V(G)$ such that: 
\begin{enumerate}
\item $\bigcup_{i \in V(T)} X_{i} = V(G)$, 
\item for each edge $\{x,y\} \in E(G)$, $x,y\in X_i$ for some  $i\in V(T)$, and 
\item for each $x\in V(G)$ the set $\{ i \mid x \in X_{i} \}$ induces a connected subtree of $T$.
\end{enumerate}
The \emph{width} of a tree decomposition $(\{ X_{i} \mid i \in V(T) \},T)$ is $\max_{i \in V(T)}\,\{|X_{i}| - 1\}$. The \emph{treewidth} of a graph $G$ (denoted as $\tw(G)$) is the minimum width over all tree decompositions of $G$. 

%
%
\medskip
Recall that our aim is to prove that  {\sc Long Circuit} is \classFPT~ for directed and undirected graphs and 
{\sc Large Euler Subgraph} is \classFPT~ for undirected graphs. Hence, we should argue that these problems are \classNP-hard.
Notice that for exact variants of these problems, i.e.,  {\sc $k$-Circuit} and  
{\sc Euler $k$-Subgraph} for undirected graphs, the \classNP-hardness was proved by Cai and Yang~\cite{CaiY11}, and 
for  {\sc Long Circuit} (for directed and undirected cases), it was shown 
by Cygan et al. in~\cite{CyganMPPS11}. 

\begin{proposition}\label{prop:NPc}
{\sc Large Euler Subgraph} is \classNP-complete for undirected graphs when $k$ is   a part of the input.
\end{proposition}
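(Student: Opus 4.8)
The plan is to establish $\classNP$-membership trivially and focus on $\classNP$-hardness via a reduction from a well-known $\classNP$-complete problem. For membership, observe that a certificate is simply a vertex set $S$; we can check in polynomial time that $G[S]$ is connected and that every vertex of $G[S]$ has even degree, and that $|S|\ge k$. For hardness, I would reduce from \textsc{Independent Set}, or more conveniently from \textsc{Vertex Cover} / \textsc{Clique}, but the cleanest route is probably a reduction from \textsc{Independent Set} on graphs of a restricted form, or directly from \textsc{$3$-SAT}. Actually the most economical approach: reduce from the \classNP-complete problem of finding a large induced subgraph with all degrees even, which is essentially what we want, but since that is the problem itself, I would instead reduce from \textsc{Independent Set}.

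First I would take an instance $(H,\ell)$ of \textsc{Independent Set}. The idea is to attach a gadget to each vertex of $H$ so that selecting a vertex into the induced Euler subgraph "costs nothing" structurally only when its neighbors are excluded. Concretely, for each vertex $v\in V(H)$ I would create a triangle (a $3$-cycle) $T_v$ and identify one of its vertices with $v$, or better, make $v$ adjacent to a fresh triangle; a triangle is the smallest Euler graph, so it contributes a connected even-degree component on its own. The construction must ensure that in any induced Euler subgraph, picking two adjacent vertices $u,v$ of $H$ together would force an odd degree somewhere (because the edge $\{u,v\}$ would then be present in the induced subgraph, changing parities). By carefully padding each vertex-gadget with enough structure, a maximum induced Euler subgraph will consist of the disjoint union of the gadgets corresponding to an independent set in $H$, plus possibly a fixed "base" Euler subgraph, and its size will be a strictly increasing function of the size of the chosen independent set. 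Then $H$ has an independent set of size $\ell$ if and only if $G$ has an induced Euler subgraph on at least $k$ vertices for the appropriately chosen $k$.

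The main obstacle I anticipate is the connectivity requirement in the definition of an Euler graph: an induced Euler subgraph must be \emph{connected}, so a disjoint union of triangles is not itself a valid solution, and one must either connect the gadgets through a common hub (while keeping all degrees even, which is delicate since a hub of high degree forces even degree there) or argue about the problem on each connected component separately. A standard fix is to allow the solution to be any induced subgraph all of whose components are Euler, or to add a connecting "spine" that is itself Eulerian and to which each gadget attaches via an even number of edges; alternatively one reduces to the version where we seek an induced subgraph in which every vertex has even degree (not necessarily connected), which is polynomially equivalent. I would handle this by building a connected gadget skeleton — e.g., a large even cycle through designated ports — so that the whole solution stays connected while parity is controlled locally at each vertex of $H$. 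Once the gadget is fixed, verifying the forward and backward directions of the reduction is routine parity bookkeeping, and the reduction is clearly polynomial-time, completing the proof.
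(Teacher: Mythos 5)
Your membership argument is fine, but the hardness half is a plan rather than a proof, and the plan has a genuine gap exactly at the point you yourself flag. You never specify the vertex gadget, and the two properties you need from it are nontrivial and in tension: (i) an induced Euler subgraph must be \emph{connected}, so "disjoint union of triangles for an independent set" is not a solution, and (ii) you must rule out \emph{unintended} solutions, i.e.\ show that no large connected induced even-degree subgraph exists other than the ones encoding independent sets. Your proposed fixes do not close this. Dropping connectivity ("the version where every vertex has even degree, not necessarily connected, which is polynomially equivalent") is an unproven claim and in any case changes the problem you are supposed to prove hard. The "spine" or "hub" idea is the right instinct, but once a spine connects all gadgets, every port vertex picks up spine edges plus gadget edges, and the parity bookkeeping that decides whether two adjacent vertices of $H$ can both be selected is precisely the content of the reduction --- asserting that it is "routine" is where the proof is missing. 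Also, the claim that including an edge $\{u,v\}$ of $H$ "forces an odd degree somewhere" is false as stated: it flips the parity at $u$ and $v$, and nothing prevents the adversarial solution from repairing both parities using other vertices unless the gadget forbids it.

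For comparison, the paper avoids all of this by reducing from \textsc{Hamiltonian Cycle} on cubic (planar) graphs rather than from \textsc{Independent Set}: subdivide every edge of a cubic graph $G$ on $n$ vertices to get $G'$. In $G'$ every vertex has degree $2$ or $3$, so in any induced Euler subgraph every vertex has degree exactly $2$; hence every induced Euler subgraph is an induced cycle, and since $G'$ is bipartite between original and subdivision vertices, an induced cycle on at least $2n$ vertices must use all $n$ original vertices and therefore corresponds to a Hamiltonian cycle of $G$ (and conversely). The degree bound $3$ makes both connectivity and the "no unintended solutions" issue disappear for free. If you want to salvage your approach, you would need to write down a concrete gadget and verify both directions; as it stands the argument is incomplete.
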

 
\begin{proof}
Let $G$ be a $n$-vertex cubic graph. Denote by $G'$ the graph obtained by subdividing each edge of $G$. It is straightforward to see that $G'$ has an induced Euler subgraph with at least $2n$ vertices if and only if $G$ is Hamiltonian.
As the {\sc Hamiltonian Cycle} is known to be \classNP-complete for 
cubic planar graphs~\cite{GareyJT76}, the result follows.
\end{proof}

\medskip
Finally, we observe that while we obtain \classFPT~ results, it is unlikely that these problems have polynomial kernels. Let $G$ be a (directed) graph with $t$ connected components $G_1,\ldots,G_t$, and let $k$ be a positive integer. Notice that $G$ has a circuit with at least $k$ edges (arcs resp.) if and only if $G_i$ has a circuit with at least $k$ edges (arcs resp.) for some $i\in\{1,\ldots,t\}$. Also $G$ has an induced Euler subgraph with at least $k$ vertices if and only if $G_i$ has an induced Euler subgraph with at least $k$ vertices for some $i\in\{1,\ldots,t\}$. By the results of Bodlaender et al.~\cite{BodlaenderDFH09}, this observation together with Proposition~\ref{prop:NPc} and the NP-hardness of  {\sc Long Circuit}~\cite{CyganMPPS11} immediately imply the following proposition.

\begin{proposition}\label{prop:kernel}
 {\sc Long Circuit} for directed and undirected graphs and {\sc Large Euler Subgraph} for undirected graphs have no polynomial kernels unless $\classNP\subseteq\classCoNP/\text{\rm poly}$.
\end{proposition}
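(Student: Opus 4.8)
The plan is to invoke the standard technique of cross-composition / OR-composition as developed by Bodlaender et al.~\cite{BodlaenderDFH09}, using the fact that the parameter behaves additively (in fact, ``maximally'') under disjoint union. Concretely, for each of the three problems, I would note that the parameterized problem is compositional in the following sense: given $t$ instances $(G_1,k),\dots,(G_t,k)$ of the problem all sharing the same parameter value $k$, their disjoint union $G := G_1 \,\dot\cup\, \cdots \,\dot\cup\, G_t$ together with the same parameter $k$ is a YES-instance if and only if at least one of the $(G_i,k)$ is a YES-instance. This is exactly the observation stated just above the proposition: a circuit lives inside a single connected component, and likewise an induced Euler subgraph, being connected by definition, lives inside a single connected component, so the longest circuit (resp.\ largest induced Euler subgraph) of $G$ is the maximum over $i$ of the longest circuit (resp.\ largest induced Euler subgraph) of $G_i$. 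Hence $G$ has a circuit with $\ge k$ edges/arcs iff some $G_i$ does, and similarly for \textsc{Large Euler Subgraph}.

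Next I would check the size bookkeeping required by the composition framework: the composed instance $(G,k)$ has size $\sum_i |G_i| \le t \cdot \max_i |G_i|$, which is polynomial in $\sum_i |G_i|$, and the parameter of the output, $k$, is bounded by $\max_i k_i$ — trivially so here since all $k_i$ equal $k$ — hence polynomially bounded in $\max_i |G_i| + k$. This is precisely the setup of an OR-composition. Since the (unparameterized) decision versions of \textsc{Long Circuit} (for directed and undirected graphs, by Cygan et al.~\cite{CyganMPPS11}) and of \textsc{Large Euler Subgraph} on undirected graphs (by Proposition~\ref{prop:NPc}) are \classNP-complete, the Bodlaender et al.\ machinery yields that none of these problems admits a polynomial kernel unless $\classNP \subseteq \classCoNP/\text{\rm poly}$ (equivalently, unless the polynomial hierarchy collapses).

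There is essentially no hard part: the entire content is the additivity-under-disjoint-union observation, which is immediate, plus the two externally supplied \classNP-completeness facts. The only point that requires a moment of care is making sure the notion of ``component-local'' witness is correct — in particular that an induced Euler subgraph is automatically confined to one connected component, which holds because ``Eulerian'' is defined to include connectedness, so one should state this explicitly rather than gloss over it. After that, the proposition follows by directly citing~\cite{BodlaenderDFH09}, and indeed the excerpt already sets this up, so the write-up is just a one-line deduction.
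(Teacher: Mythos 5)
Your proposal is correct and follows exactly the paper's argument: the paper derives the proposition from the same disjoint-union OR-composition observation stated just before it, combined with Proposition~\ref{prop:NPc}, the \classNP-hardness of \textsc{Long Circuit} from~\cite{CyganMPPS11}, and the framework of Bodlaender et al.~\cite{BodlaenderDFH09}. Your added care about the induced Euler subgraph being confined to one component (since Eulerian includes connectedness) matches the paper's implicit reasoning.
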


\section{Large Euler subgraphs}\label{sec:EulerSubgr}

\subsection{Large Euler subgraphs for undirected graphs}\label{sec:vertex-undir}
In this section we show that {\sc Large Euler Subgraph} is \classFPT~ for undirected graphs. Using Ramsey arguments, we prove that if a graph $G$ has sufficiently large treewidth, then $G$ has an induced Euler subgraph on at least $k$ vertices. Then if the input graph has large treewidth, we have a YES-answer. Otherwise, we use the fact that {\sc Large Euler Subgraph} can be solved in \classFPT~ time for graphs of bounded treewidth. All considered here graphs are undirected.

For a given positive integer $k$, we define the function $f(\ell)$ for integers $\ell\geq 2$ recursively as follows:
\begin{itemize}
\item $f(2)=R(k,k-1)+1$,
\item $f(\ell)=(k-1)(2(\ell-1)(f(\lfloor\frac{\ell}{2}\rfloor+1)-1)+1)+1$ for $\ell>2$.
\end{itemize}  

We need the following two lemmas.

\begin{lemma}\label{lem:connect}
Let $G$ be a graph, and suppose that $s,t$ are distinct vertices joined by at least $f(\ell)$ internally vertex-disjoint paths of length at most $\ell$ in $G$ for some $\ell\geq 2$. Then $G$ has an induced Euler subgraph on at least $k$ vertices.  
\end{lemma}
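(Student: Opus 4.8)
The plan is to prove the statement by induction on $\ell$, matching the recursive definition of $f$. The base case $\ell=2$ is the combinatorial heart: if $s$ and $t$ are joined by $f(2)=R(k,k-1)+1$ internally vertex-disjoint paths of length at most $2$, then at most one of them is the single edge $\{s,t\}$, so at least $R(k,k-1)$ of them have length exactly $2$, i.e.\ pass through distinct internal vertices $w_1,\dots,w_{R(k,k-1)}$, each adjacent to both $s$ and $t$. Apply the Ramsey bound to $G[\{w_1,\dots,w_{R(k,k-1)}\}]$: either we find a clique of size $k$ among the $w_i$'s, or an independent set of size $k-1$. In the clique case, $k$ mutually adjacent common neighbors of $s,t$ together with $s$ give a clique $K_{k+1}$ (or we can trim to get the parity right — an odd clique is already Eulerian, an even clique needs one vertex dropped), so we extract an induced Euler subgraph (a clique on an odd number $\ge k$ of vertices, which is $(k-1)$-regular hence Eulerian and induced) on at least $k$ vertices. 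In the independent-set case, $s$, $t$, and $k-1$ pairwise nonadjacent common neighbors form an induced complete bipartite graph $K_{2,k-1}$; since $K_{2,k-1}$ is connected with all degrees even ($s$ and $t$ have degree $k-1$... wait, that must be even — so one adjusts to $K_{2,m}$ with $m$ even, or uses the two sides carefully), it is an induced Euler subgraph on $k+1\ge k$ vertices. The exact parity bookkeeping (choosing $k$ vs $k-1$, odd vs even) is exactly why $f(2)$ involves $R(k,k-1)$ rather than $R(k,k)$, and I expect this to be the fiddly-but-routine part.

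For the inductive step $\ell>2$, suppose $s,t$ are joined by $f(\ell)$ internally vertex-disjoint paths $P_1,\dots,P_{f(\ell)}$, each of length at most $\ell$. The idea is to find, inside this collection, a pair of vertices joined by at least $f(\lfloor \ell/2\rfloor+1)$ internally vertex-disjoint paths of length at most $\lfloor \ell/2\rfloor+1$, and then invoke the induction hypothesis. Concretely, pick a ``midpoint'' on each path: since each $P_i$ has length at most $\ell$, it has a vertex $m_i$ splitting it into an $(s,m_i)$-subpath and an $(m_i,t)$-subpath each of length at most $\lceil \ell/2\rceil\le \lfloor \ell/2\rfloor+1$. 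These midpoints lie in at most... here is where the arithmetic in the definition of $f(\ell)$ comes in: the total number of internal vertices over all paths is bounded, and by an averaging/pigeonhole argument over how the midpoints distribute, either many midpoints coincide at a single vertex $m$ — giving many internally vertex-disjoint $(s,m)$-paths or $(m,t)$-paths of length $\le \lfloor\ell/2\rfloor+1$, to which induction applies with $s$ or $t$ playing the role of one endpoint — or the midpoints are spread out but then two of the short half-paths can be concatenated. The factor $(k-1)$ and the $2(\ell-1)$ term in $f(\ell)$ are precisely tuned so that after this pigeonhole step the guaranteed number of short subpaths between some pair of vertices is at least $f(\lfloor\ell/2\rfloor+1)$; I would reverse-engineer the exact counting from that formula rather than guess it.

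The main obstacle will be the inductive step's pigeonhole argument: making precise the claim that $f(\ell)$ paths of length $\le \ell$ force $f(\lfloor\ell/2\rfloor+1)$ internally-disjoint paths of length $\le \lfloor\ell/2\rfloor+1$ between \emph{some} fixed pair of vertices, while carefully tracking internal-disjointness through the midpoint split and any concatenation (concatenating two half-paths could create a path that revisits $s$ or $t$, so one must argue such short cycles through $s$ or $t$ are themselves useful, or avoid them). A secondary point is ensuring the subgraph we ultimately produce is genuinely \emph{induced} and not merely a subgraph — in the Ramsey base case this is automatic because cliques and $K_{2,m}$'s on selected vertex sets are induced by construction, so the right move is to make sure the inductive step always bottoms out in the base case rather than trying to build the Euler subgraph directly at level $\ell$.
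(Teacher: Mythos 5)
The base case $\ell=2$ of your plan matches the paper's argument (Ramsey applied to the common neighbours of $s$ and $t$, then parity bookkeeping with cliques and $K_{2,m}$'s), so that part is fine. The inductive step, however, has a genuine gap: its central mechanism cannot work as stated. The paths $P_1,\dots,P_{f(\ell)}$ are \emph{internally vertex-disjoint}, so their midpoints $m_i$ are pairwise distinct vertices; they can never ``coincide at a single vertex $m$'', and no averaging over how the midpoints distribute can produce many short paths sharing a common endpoint other than $s$ or $t$. Splitting each path at its midpoint therefore yields nothing to which the induction hypothesis applies, and concatenating two half-paths merely reconstitutes paths you already had.

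The real obstacle, which your plan does not engage with, is that the union of $k-1$ of the given paths need not be an \emph{induced} subgraph: $G$ may contain chords between internal vertices of different paths, destroying the even-degree condition. The paper's argument is a dichotomy on these chords. First each path is shortcut so that it has no internal chords. Then either (i) some internal vertex $v$ of some path is adjacent to internal vertices of at least $2f(\lfloor\ell/2\rfloor+1)-1$ other paths; among those, $f(\lfloor\ell/2\rfloor+1)$ of the neighbours lie within distance $\lfloor\ell/2\rfloor$ of the \emph{same} endpoint ($s$ or $t$) along their own paths, so prepending the single chord from $v$ gives $f(\lfloor\ell/2\rfloor+1)$ internally vertex-disjoint paths of length at most $\lfloor\ell/2\rfloor+1<\ell$ between $v$ and that endpoint --- this is where the recursion fires (the paper phrases it as contradicting the minimality of $\ell$ rather than as induction, which is cosmetic); or (ii) every internal vertex meets at most $2f(\lfloor\ell/2\rfloor+1)-2$ other paths, so each path is ``adjacent'' to at most $2(\ell-1)(f(\lfloor\ell/2\rfloor+1)-1)$ others, and the factor $k-1$ in $f(\ell)$ lets one greedily pick $k-1$ pairwise non-adjacent paths whose union (dropping one path and adjusting for the possible edge $\{s,t\}$, for parity) is an induced Euler subgraph built \emph{directly at level $\ell$}. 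In particular your closing remark --- that the inductive step should always bottom out in the base case --- is the wrong instinct: the Euler subgraph is typically constructed at the top level as a union of chord-free paths, and the recursion serves only to exclude dense chord configurations.
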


\begin{proof}
Consider the minimum value of $\ell$ such that $G$ has $f(\ell)$ internally vertex disjoint $(s,t)$-paths.
We have at least $r=f(\ell)-1$ such paths $P_1,\ldots,P_r$ that are distinct from the path  $s,t$. 
We assume that each path $P_i$ has no chords that either join two internal vertices or an internal vertex and one of the end-vertices, i.e., each internal vertex is adjacent in $G[V(P_i)]$ only to its two neighbors in $P_i$. Otherwise, we can replace $P_i$ by a shorter path with all vertices in $V(P_i)$ distinct from the path $s,t$.  
We consider two cases.

\medskip
\noindent
{\bf Case 1. $\mathbf{ \ell=2}$.} The paths $P_1,\ldots,P_r$ are of length two and therefore have exactly one internal vertex. Assume that $u_1,\ldots,u_{r}$ are internal vertices of these paths. Because $r=f(2)-1=R(k,k-1)$, the graph $G[\{u_1,\dots,u_{r}\}]$ either has a clique $K$ of size $k$ or an independent set $I$ of size at least $k-1$. 
Suppose that $G$ has a clique $K$. If $k$ is odd, then $G[K]$ is an induced Euler subgraph on $k$ vertices. If $k$ is even, then $G[K\cup\{s\}]$ is an induced Euler subgraph on $k+1$ vertices. 
Assume now that that $I\subseteq\{u_1,\ldots,u_{r-1}\}$ is an independent set of size $k-1$. Let $v\in I$. 
If $\{s,t\}\in E(G)$ and $k$ is even or $\{s,t\}\notin E(G)$ and $k$ is odd, then $G[I\cup\{s,t\}]$ is an induced Euler subgraph on $k+1$ vertices.
Else if $\{s,t\}\notin E(G)$ and $k$ is even or $\{s,t\}\in E(G)$ and $k$ is odd, then $G[I\cup\{s,t\}\setminus \{v\}]$ is an induced Euler subgraph on $k$ vertices.

\medskip
\noindent
{\bf Case 2. $\mathbf{ \ell\geq3}$.} 
We say that paths $P_i$ and $P_j$ are \emph{adjacent} if they have adjacent internal vertices.
Let $p=f(\lfloor \ell/2\rfloor+1)$.
Suppose that there is an internal vertex $v$ of one of the paths $P_1,\ldots,P_r$ that is adjacent to at least 
$2p-1$ internal vertices of some other distinct $2p-1$ paths.
Then there are $p=f(\lfloor \ell/2\rfloor+1)$ paths $P_{i_1},\ldots,P_{i_p}$ that have respective internal vertices $v_1,\ldots,v_p$ such that i) $v$ is adjacent to $v_1,\ldots,v_p$ and
ii) either each $v_j$ is at distance at most $\lfloor \ell/2\rfloor$ from $s$ in $P_{i_j}$ for all $j\in\{1,\ldots,p\}$ or  
each $v_j$ is at distance at most $\lfloor \ell/2\rfloor$ from $t$ in $P_{i_j}$ for all $j\in\{1,\ldots,p\}$. But then either the vertices $s,v$ or $v,t$ are joined by at least 
$f(\lfloor \ell/2\rfloor+1)$ internally vertex-disjoint paths of length at most $\lfloor \ell/2\rfloor+1<\ell$. This contradicts our choice of $\ell$. 
Hence, for each $i\in \{1,\ldots,r\}$, any internal vertex of $P_i$ has adjacent internal vertices in at most $2p-2$ other paths, and $P_i$ is adjacent to at most $2(\ell-1)(p-1)$ other paths. 
As $r=(k-1)(2(\ell-1)(p-1)+1)$, there are $k-1$ distinct paths $P_{i_1},\ldots,P_{i_{k-1}}$ that are pairwise non-adjacent, i.e., they have no adjacent internal vertices.

Let $H=G[V(P_{i_1})\cup\ldots\cup V(P_{i_{k-1}})]$ and $H'=G[V(P_{i_1})\cup\ldots\cup V(P_{i_{k-2}})]$.
Notice that by our choice of the paths, $H=P_{i_1}\cup\ldots\cup P_{i_{k-1}}$ and $H'=P_{i_1}\cup\ldots\cup P_{i_{k-2}}$ if $\{s,t\}\notin E(G)$, and 
$P_{i_1}\cup\ldots\cup P_{i_{k-1}}$ ($P_{i_1}\cup\ldots\cup P_{i_{k-2}}$ resp.) can be obtained from $H$ ($H'$ resp.) by the removal of $\{s,t\}$ if $s,t$ are adjacent.
If $\{s,t\}\in E(G)$ and $k$ is even or $\{s,t\}\notin E(G)$ and $k$ is odd, then $H$ is an induced Euler subgraph on at least $k+1$ vertices.
Else if $\{s,t\}\notin E(G)$ and $k$ is even or $\{s,t\}\in E(G)$ and $k$ is odd, then $H'$ is an induced Euler subgraph on at least $k$ vertices.
\end{proof}

For $k\geq 4$, let
$$\Delta_k=1+\frac{(f(3k-8)-1)((f(3k-8)-2)^{3(k-3)}-1)}{f(3k-8)-3}.$$

\begin{lemma}\label{lem:deg}
For $k\geq 4$,
any 2-connected graph $G$ with $\Delta(G)>\Delta_k$ 
has an induced Euler subgraph on at least $k$ vertices. 
\end{lemma}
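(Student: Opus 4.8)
Assume for contradiction that $G$ is $2$-connected with $\Delta(G)>\Delta_k$ but has no induced Euler subgraph on at least $k$ vertices, and fix $v\in V(G)$ with $d_G(v)=\Delta(G)$. Since $k\geq 4$, write $\ell_0:=3k-8\geq 4$ and $N:=f(3k-8)$, so that $\ell_0-1=3(k-3)$. I will use the failure assumption only through the contrapositive of Lemma~\ref{lem:connect}: for every $\ell\geq 2$, no two distinct vertices of $G$ are joined by $f(\ell)$ internally vertex-disjoint paths of length at most $\ell$.

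First I would record a short list of ``dense local configurations'' that each already produce the desired subgraph, and argue each is forbidden. (i) A clique $K$ on $k$ vertices inside $G[N_G(v)]$: then $G[K]=K_k$ works if $k$ is odd and $G[K\cup\{v\}]=K_{k+1}$ works if $k$ is even. (ii) An induced matching in $G[N_G(v)]$ with at least $\lceil(k-1)/2\rceil$ edges: its vertex set together with $v$ induces an Euler subgraph, since $v$ is adjacent to all $2\lceil(k-1)/2\rceil$ matched vertices (even degree), every matched vertex has degree $2$, and the subgraph is connected through $v$; it has at least $k$ vertices. (iii) Two vertices with at least $f(2)=R(k,k-1)+1$ common neighbours, i.e.\ joined by $f(2)$ internally vertex-disjoint paths of length $2$: then Lemma~\ref{lem:connect} applies with $\ell=2$. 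Each of these contradicts the failure assumption, so from now on none occurs.

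The heart of the proof is to show that these local restrictions around $v$ are incompatible with $d_G(v)>\Delta_k$. I would do this with a bounded-depth search: starting from $v$ (or from its neighbourhood), grow, over at most $\ell_0-1=3(k-3)$ layers, a family of internally vertex-disjoint paths leaving $v$, keeping the family alive via Menger's theorem and the $2$-connectedness of $G$. At every layer one of two things must hold. Either at some vertex $w$ the family contains $N=f(3k-8)$ paths to \emph{distinct} neighbours of $v$, pairwise meeting only at $w$ and of length at most $3k-9$; appending to each the final edge to $v$ gives $N$ internally vertex-disjoint $(v,w)$-paths of length at most $\ell_0$, contradicting the contrapositive of Lemma~\ref{lem:connect}. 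Or the branching of the search is forced below $N-1$ at the first level and $N-2$ at every later level — a larger branching would, via the absence of the configurations above, again yield one of them or such a family of paths — in which case the portion of $G$ the search is compelled to enter has at most
$$1+(N-1)+(N-1)(N-2)+\cdots+(N-1)(N-2)^{3(k-3)-1}=\Delta_k$$
vertices, contradicting $d_G(v)>\Delta_k$. The thresholds $N-1$ and $N-2$, the depth $3(k-3)$, and the recursion defining $f$ are precisely calibrated so that the second alternative's count collapses to the closed form of $\Delta_k$.

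I expect the main obstacle to be exactly the previous paragraph: arranging the search so that the two alternatives are genuinely exhaustive. One must convert the local facts about $G[N_G(v)]$ (no large clique, no large induced matching, no large family of common-neighbour ``twins'') into a global statement using only $2$-connectedness, and control how the search branches through balls of bounded radius so that the bookkeeping matches $\Delta_k$; in particular, configurations of low local connectivity that still hide a large induced Euler subgraph — a long induced structure among $v$'s neighbours being the cautionary example — must be absorbed into case (ii), the ``short paths'' case, or the branching bound. Given the search, the remaining pieces — the reduction through Lemma~\ref{lem:connect} and the three dense-configuration cases — are routine.
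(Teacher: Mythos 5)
Your skeleton --- bounded branching and bounded depth force at most $\Delta_k$ reachable vertices, while large branching forces $f(3k-8)$ short internally vertex-disjoint paths to a common vertex and hence Lemma~\ref{lem:connect} --- is the same dichotomy the paper uses, and your closed-form count agrees with $\Delta_k$. But the step you yourself flag as ``the main obstacle'' is a genuine gap, and it is precisely where the paper has to supply new ideas. Write $u$ for the maximum-degree vertex. The object to search is not a ball around $u$: the paper fixes a neighbour $v$ of $u$ and takes the tree $T$ of shortest paths in $G-u$ from $v$ to \emph{all other neighbours of $u$}. This is the only place 2-connectedness is used --- it guarantees $G-u$ is connected; it cannot ``keep a family alive via Menger,'' since 2-connectedness yields only two disjoint paths, not $f(3k-8)$ of them. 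With this choice of $T$, one gets for free that (a) $N_G(u)\subseteq V(T)$, so $d_G(u)\le|V(T)|$, which is what makes the final count a contradiction, and (b) every branch of $T$ below a vertex $w$ of degree $\ge f(3k-8)$ contains a vertex of $N_G(u)$, so appending one edge to $u$ yields $f(3k-8)$ internally vertex-disjoint $(u,w)$-paths (paths to $u$, not to the root $v$) of length at most $3k-8$. Your sketch never establishes that the region the search ``is compelled to enter'' contains all of $N_G(u)$, so the inequality $d_G(u)\le\Delta_k$ does not follow from it.

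The second missing ingredient is the depth bound, for which your configurations (i)--(iii) are not the right tool (your (i) and (iii) in fact belong to Case 1 of the proof of Lemma~\ref{lem:connect}, and (ii) is never forced to occur). The paper's Claim~A does the work: if some $w\in N_G(u)$ is at distance at least $3(k-3)+1$ from $v$ in $G-u$, take the shortest (hence induced) $(v,w)$-path $P$ in $T$, list the vertices of $N_G(u)$ lying on it in order $v=v_0,\dots,v_r=w$, and split the consecutive subpaths into three groups by index modulo $3$; one group has at least $k-2$ edges, and its vertex set together with $u$ induces a flower of induced cycles meeting only at $u$ --- an induced Euler subgraph on at least $k$ vertices. (Your induced-matching configuration is the special case where every petal is a triangle.) This is what bounds the depth of $T$ by $3(k-3)$ and makes your two alternatives exhaustive; without it the dichotomy, and hence the proof, is incomplete.
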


\begin{proof}
Let $G$ be a 2-connected graph and let $u$ be a vertex of $G$ with $d_G(u)=\Delta(G)$. As $G$ is 2-connected, $G'=G-u$ is connected. Let $v$ be an arbitrary vertex of $N_G(u)$. Denote by $T$ a tree of shortest paths from $v$ to all other vertices of $N_G(u)$ in $G'$. 

\medskip
\noindent
{\bf Claim A. } {\it 
If there is a $(v,w)$-path $P$ of length at least $3(k-3)+1$ in $T$ for some $w\in N_G(u)$, then $G$ has an induced Euler subgraph on at least $k$ vertices. 
} 

\begin{proof}[Proof of Claim~A]
Denote by $v_0,\ldots,v_r$ the vertices of $N_G(u)$ in $P$ in the path order, $v_0=v$ and $v_r=w$. 
Let $Q_1$ be the union of $(v_0,v_1),(v_3,v_4),\ldots,(v_{\lfloor r/3\rfloor},v_{\lfloor r/3\rfloor+1})$-subpaths of $P$, let $Q_2$ be the union of $(v_1,v_2),(v_4,v_5),\ldots,(v_{\lfloor r/3\rfloor+1},v_{\lfloor r/3\rfloor+2})$-subpaths of $P$, and let $Q_3$
be the union of $(v_2,v_3),(v_5,v_6),\ldots,(v_{\lfloor r/3\rfloor-1},v_{\lfloor r/3\rfloor})$-subpaths of $P$.
Notice that some subpaths can be empty depending whether $r$ modulo 3 is 0, 1 or 2. Observe that $Q_1,Q_2,Q_3$ are edge-disjoint induced subgraphs of $G$.
Since $Q_1\cup Q_2\cup Q_3=P$, there is $Q_i$ for $i\in\{1,2,3\}$ with at least $k-2$ edges. Then $Q_i$ has at least $k-1$ vertices. Let $H=G[V(Q_i)\cup \{u\}]$. By the definition of $Q_i$, $H$ is a union of induced cycles with one common vertex $u$ such that for different cycles $C_1,C_2$ in the union, 
$V(C_1)\cap V(C_2)=\{u\}$ and $\{x,y\}\notin E(G)$ whenever $x\in V(C_1)\setminus \{u\}$ and $y\in V(C_2)\setminus\{u\}$. Hence, $H$ is an Euler graph with at least $k$ vertices. 
\end{proof}

From now we assume that all $(v,w)$-paths in $T$ have length at most $3(k-3)$ for $w\in N_G(u)$.

\medskip
\noindent
{\bf Claim B. } {\it 
If there is a vertex $w\in V(T)$ with $d_T(w)\geq f(3(k-3)+1)$, then $G$ has an induced Euler subgraph on at least $k$ vertices. 
} 
 
\begin{proof}[Proof of Claim~B]
Recall that $T$ is a tree of shortest paths from $v$. We assume that $T$ is rooted in $v$. Then the roof defines the parent-child relation on $T$. Let $x_0$ be a parent of $w$ (if exists) and let $x_1,\ldots,x_r$ be the children of $w$.  If $w$ has no parent, then $w=v$ and $r\geq f(3(k-3)+1)$. Otherwise, $r\geq f(3(k-3)+1)-1$. Let $y_0=v$. For each $i\in\{1,\ldots,r\}$, let $y_i\in V(T)\cap N_G(u)$ be a descendant of $x_i$ in $T$. Denote by $P_i$ the unique $(w,y_i)$-path in $T$ for $i\in\{0,\ldots,r\}$.
As all $(v,w)$-paths in $T$ have length at most $3(k-3)$ for $w\in N_G(u)$, the paths 
$P_0,\ldots,P_r$ have length at most $3(k-3)$. Notice that these paths have no common vertices except $w$.
Observe now that $y_0,\ldots,y_r$ are adjacent to $u$ in $G$. Therefore, we have at least  $f(3(k-3)+1)$ 
internally vertex-disjoint $(u,w)$-paths in $G$. By Lemma~\ref{lem:connect}, it implies that $G$ 
has an induced Euler subgraph on at least $k$ vertices. 
\end{proof}

To complete the proof of the lemma, it remains to observe that if $\Delta(T)<f(3(k-3)+1)$ and all $(v,w)$-paths in $T$ have length at most $3(k-3)$ for $w\in N_G(u)$, then $$d_G(u)\leq |V(T)|\leq 1+\frac{(f(3k-8)-1)((f(3k-8)-2)^{3(k-3)}-1)}{f(3k-8)-3}=\Delta_k.$$
\end{proof}

Kosowski et al. \cite{KosowskiLNS12} obtained the following bound for treewidth.

\begin{theorem}[\cite{KosowskiLNS12}]\label{thm:bod}
Let $G$ be a graph without induced cycles with at least $k\geq 3$ vertices and let $\Delta(G)\geq 1$. Then $\tw(G)\leq k(\Delta(G)-1)+2$. 
\end{theorem}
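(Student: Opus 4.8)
The plan is to construct an explicit tree decomposition of $G$ from a depth-first search tree and to bound its width using the hypothesis on induced cycles. Since $\tw(G)$ is the maximum of the treewidths of its connected components, fix one component, run a DFS from an arbitrary root, and let $T$ be the resulting DFS tree; then every edge of $G$ is either a tree edge of $T$ or a \emph{back edge} joining a vertex to a proper ancestor. For $v\in V(G)$ let $T_v$ be the subtree of $T$ rooted at $v$, and let $A(v)$ consist of the parent of $v$ (if any) together with every proper ancestor $u$ of $v$ such that some back edge joins $u$ to a vertex of $T_v$. Set $X_v=\{v\}\cup A(v)$ and take $T$ as the tree of the decomposition. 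A routine check shows $\{X_v\}_{v\in V(G)}$ is a tree decomposition of $G$: each vertex lies in its own bag, each tree edge $\{v,p(v)\}$ lies in $X_v$, each back edge $\{x,u\}$ lies in $X_x$, and for each $w$ the bags containing $w$ induce a connected subtree of $T$ (they are $w$ together with the descendants of $w$ that still ``see'' $w$ via a back edge). As $|X_v|=1+|A(v)|$, the width of this decomposition is $\max_v|A(v)|$, so it suffices to bound $|A(v)|$.

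Here is where the cycle hypothesis enters. An \emph{arbitrary} DFS does not suffice --- already a zig-zag triangulated polygon (of treewidth $2$) admits, for a careless root, decompositions of unbounded width because back edges reach far up $T$ --- so the DFS must be made to stay local: when at a vertex $v$ with parent $p$, continue into an unvisited common neighbour of $v$ and $p$ whenever one exists (a rule in the spirit of maximum-cardinality or lexicographic search), extending arbitrarily otherwise. The claim to prove is that for such a DFS, $|A(v)|\le k(\Delta(G)-1)+2$ for every $v$. Suppose not, and let $u$ be the topmost vertex of $A(v)$, witnessed by a back edge $\{x,u\}$ with $x\in T_v$. The tree path from $u$ to $x$ passes through $v$, and all of $A(v)$ lies on its sub-path from $u$ to $v$, so this path is long. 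With the back edge $\{x,u\}$ it forms a cycle $C_0$ whose vertices all lie on that path and whose chords are back edges between path-vertices. One now extracts an \emph{induced} cycle $C\subseteq G[V(C_0)]$ by short-cutting $C_0$ greedily along its chords, using the locality of the DFS to guarantee that the short-cut never re-introduces a chord and that consecutive vertices of $C$ are separated by at most $\Delta(G)-1$ vertices of the path; hence $|C|$ is at least the length of the path divided by $\Delta(G)-1$. If $|A(v)|\ge k(\Delta(G)-1)+3$ this forces $|C|\ge k$, contradicting the hypothesis, so $|A(v)|\le k(\Delta(G)-1)+2$ and the decomposition has width at most $k(\Delta(G)-1)+2$.

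The hard part is precisely this short-cutting lemma and the bookkeeping that produces the exact constant $k(\Delta(G)-1)+2$: one must fix the DFS rule tightly enough that the greedy short-cut is genuinely chord-free and that each of its steps skips at most $\Delta(G)-1$ path-vertices, since a naive argument only shrinks $C_0$ to an induced cycle whose length need not be comparable to that of the original path. Some dependence on $\Delta(G)$ is unavoidable (cliques and triangulated polygons are $4$-chordal yet have unbounded treewidth). If the DFS argument proved too delicate I would fall back on the standard equivalence between treewidth and balanced separators, showing instead that every subgraph of $G$ admits a $\tfrac23$-balanced separator of size $O(k\,\Delta(G))$ by taking a suitable BFS layer and trimming it with the cycle hypothesis; but extracting the precise multiplicative constant from that route looks no easier, so I would attempt the direct construction first.
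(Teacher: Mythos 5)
This statement is not proved in the paper at all: it is Theorem~\ref{thm:bod} imported verbatim from Kosowski, Li, Nisse and Suchan~\cite{KosowskiLNS12}, where the argument runs through a BFS-based decomposition whose bags are dominated by paths on at most $k-1$ vertices (equivalently, a cops-and-robbers strategy), not through a DFS tree. So there is no in-paper proof to match your attempt against, and your attempt has to stand on its own as a proof of a nontrivial external result.

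It does not yet stand. The first half — the DFS tree decomposition with bags $X_v=\{v\}\cup A(v)$, its verification, and the reduction to bounding $|A(v)|$ — is standard and correct. But the entire content of the theorem is concentrated in the ``short-cutting lemma'' that you state and do not prove: that $G[V(C_0)]$ contains an induced cycle whose consecutive vertices are at most $\Delta(G)-1$ apart along the $u$--$x$ tree path. This is false for a generic long cycle with chords (put a near-complete set of back edges on the path and every induced cycle in $G[V(C_0)]$ is a triangle), so everything rests on the DFS rule controlling the chord structure. The rule you give — descend into an unvisited common neighbour of $v$ and its parent when one exists — is not shown to do this, and on its face it does not: nothing in it prevents a single back edge from a vertex deep in $T_v$ to a vertex just below $u$, which spans almost the whole path and lets the greedy short-cut collapse to an induced cycle of bounded length regardless of $|A(v)|$. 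You flag this step yourself as ``the hard part,'' but it is not a technical refinement of an otherwise complete argument; it \emph{is} the theorem, and without it no bound on $|A(v)|$ (hence on the width) follows. The fallback via balanced separators is likewise only a sketch. To complete a proof you would either need to formulate and prove a precise invariant of the search order that bounds the span of every chord of $C_0$ in terms of $\Delta(G)$, or switch to the BFS/dominating-path machinery of~\cite{KosowskiLNS12}.
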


This theorem together with Lemma~\ref{lem:deg} immediately imply the next lemma.

\begin{lemma}\label{lem:tw-bound}
Let $G$ be a graph and let $k\geq 4$. If 
$\tw(G)> k(\Delta_k-1)+2$, 
then $G$ has an induced Euler subgraph on at least $k$ vertices. 
\end{lemma}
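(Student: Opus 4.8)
The statement to be proved, Lemma~\ref{lem:tw-bound}, is an almost immediate combination of two facts already available in the excerpt, so the "proof" is really just a contrapositive bookkeeping argument. Here is the plan.

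\medskip\noindent\textbf{Plan.} The plan is to argue by contradiction: assume $\tw(G) > k(\Delta_k - 1) + 2$ but $G$ has no induced Euler subgraph on at least $k$ vertices, and derive a contradiction with Theorem~\ref{thm:bod}. First I would observe that if $G$ has no induced Euler subgraph on at least $k$ vertices, then in particular $G$ has no induced cycle on at least $k$ vertices, since (for $k\ge 3$) an induced cycle is a connected graph all of whose vertices have degree $2$, hence an induced Euler subgraph. So Theorem~\ref{thm:bod} applies and gives $\tw(G) \le k(\Delta(G) - 1) + 2$, which combined with the assumed lower bound on $\tw(G)$ forces $\Delta(G) > \Delta_k$. (A small technical wrinkle here: Theorem~\ref{thm:bod} requires $\Delta(G) \ge 1$, but if $\Delta(G) = 0$ then $\tw(G) = 0$ and the hypothesis $\tw(G) > k(\Delta_k-1)+2 \ge 0$ is violated, so we may assume $\Delta(G)\ge 1$ for free; similarly $\Delta(G) \ge 1$ makes Theorem~\ref{thm:bod} meaningful.)

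\medskip\noindent The second step is to get from $\Delta(G) > \Delta_k$ to the conclusion via Lemma~\ref{lem:deg}, which says precisely that a $2$-connected graph of maximum degree exceeding $\Delta_k$ has an induced Euler subgraph on at least $k$ vertices. The only gap is that Lemma~\ref{lem:deg} assumes $2$-connectedness, whereas here $G$ is arbitrary. This is handled by a standard reduction to a block: a graph of treewidth $t$ has a $2$-connected subgraph (a block, i.e.\ a biconnected component) of treewidth $t$, because treewidth equals the maximum treewidth over the blocks (tree decompositions can be glued along cut vertices). Actually it is cleaner to avoid blocks and instead note that having large treewidth already forces the existence of a $2$-connected subgraph $H$ with $\tw(H) = \tw(G) > k(\Delta_k - 1) + 2$; then $H$ still has no induced cycle on $k$ vertices (any induced cycle of $H$ on $k$ vertices, being a full block, would also be an induced subgraph of $G$ — wait, an induced subgraph of $H$ need not be induced in $G$). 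To sidestep this subtlety entirely, I would run the whole argument inside $H$: apply Theorem~\ref{thm:bod} to $H$ (if $H$ had an induced cycle on $\ge k$ vertices then so would $G$, since $H=G[V(H)]$ is itself an induced subgraph when $H$ is taken to be an induced subgraph — but a block need not be induced). The genuinely safe route: take $H$ to be a $2$-connected \emph{subgraph} with $\tw(H)=\tw(G)$; since $\tw(H) > k(\Delta_k-1)+2$, Theorem~\ref{thm:bod} applied to $H$ forces an induced cycle of $H$ on $\ge k$ vertices or $\Delta(H) > \Delta_k$; in the former case contract that cycle back — hmm. The honest statement is that one should pass to a block $B$ of $G$ with $\tw(B) = \tw(G)$, observe $B$ is an induced subgraph of $G$ (blocks are induced subgraphs!), so any induced cycle of $B$ is an induced cycle of $G$; thus $B$ has none on $\ge k$ vertices, Theorem~\ref{thm:bod} gives $\Delta(B) > \Delta_k$, and then Lemma~\ref{lem:deg} applied to the $2$-connected graph $B$ yields an induced Euler subgraph of $B$ — hence of $G$ — on $\ge k$ vertices, the desired contradiction.

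\medskip\noindent\textbf{Main obstacle.} There is no real mathematical obstacle; the lemma is a corollary. The only thing requiring a moment of care is the passage from ``$G$ arbitrary of large treewidth'' to ``a $2$-connected graph of large treewidth'' so that Lemma~\ref{lem:deg} is applicable, and the matching observation that a block of $G$ is an induced subgraph, so that ``induced Euler subgraph'' and ``induced cycle'' are inherited in both directions between $G$ and the block. Writing this out carefully — rather than just saying ``immediately imply'' as the excerpt does — is the entire content of the proof. I would keep it to a few lines.

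\begin{proof}
We prove the contrapositive. Suppose $G$ has no induced Euler subgraph on at least $k$ vertices; we show $\tw(G) \le k(\Delta_k - 1) + 2$.

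If $G$ is edgeless then $\tw(G) = 0 \le k(\Delta_k-1)+2$ and we are done, so assume $E(G)\neq\emptyset$ and let $B$ be a block (maximal $2$-connected subgraph, or a single edge) of $G$ with $\tw(B) = \tw(G)$; such a block exists because the treewidth of a graph equals the maximum treewidth over its blocks. Note $B = G[V(B)]$, i.e.\ $B$ is an induced subgraph of $G$. If $B$ is a single edge then $\tw(G) = \tw(B) = 1 \le k(\Delta_k-1)+2$ (recall $k \ge 4$), so assume $B$ is $2$-connected.

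Since $B$ is an induced subgraph of $G$, every induced cycle of $B$ is an induced cycle of $G$. For $k \ge 3$, an induced cycle on at least $k$ vertices is a connected graph all of whose vertices have degree $2$, hence an induced Euler subgraph on at least $k$ vertices; as $G$ has no such subgraph, $B$ has no induced cycle on at least $k$ vertices.

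Suppose $\Delta(B) > \Delta_k$. Then $B$ is a $2$-connected graph with $\Delta(B) > \Delta_k$, so by Lemma~\ref{lem:deg} it has an induced Euler subgraph on at least $k$ vertices, which is then also an induced Euler subgraph of $G$ on at least $k$ vertices, a contradiction. Hence $\Delta(B) \le \Delta_k$.

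Finally, if $\Delta(B) = 0$ this is impossible since $B$ is $2$-connected; so $\Delta(B) \ge 1$, and Theorem~\ref{thm:bod} applied to $B$ (which has no induced cycle on at least $k \ge 4 \ge 3$ vertices) gives
$$\tw(G) = \tw(B) \le k(\Delta(B) - 1) + 2 \le k(\Delta_k - 1) + 2,$$
as required.
\end{proof}
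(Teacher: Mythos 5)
Your proof is correct and follows essentially the same route as the paper: pass to a $2$-connected block of maximal treewidth, then split on whether its maximum degree exceeds $\Delta_k$, invoking Lemma~\ref{lem:deg} in one case and Theorem~\ref{thm:bod} in the other. The contrapositive framing and the explicit verification that blocks are induced subgraphs (so induced cycles and induced Euler subgraphs transfer to $G$) are just careful write-ups of steps the paper leaves implicit.
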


\begin{proof}
Suppose that $\tw(G)> k(\Delta_k-1)+2$. Then $G$ has a 2-connected component $G'$ with $\tw(G\rq{})> k(\Delta_k-1)+2$. If $\Delta(G')>\Delta_k$, then $G'$ has  an induced Euler subgraph on at least $k$ vertices by Lemma~\ref{lem:deg}.  Otherwise, by Theorem~\ref{thm:bod}, $G'$ has an induced cycle on at least $k$ vertices, i.e., an induced Euler subgraph. 
\end{proof}

Now we show that {\sc Large Euler Subgraph} is \classFPT~ for graphs of bounded treewidth. 

\begin{lemma}\label{lem:cmsol}
For any positive integer $t$, {\sc Large Euler Subgraph} can be solved in linear time for graphs of treewidth at most $t$.
\end{lemma}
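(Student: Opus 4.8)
The plan is to express the target property in counting monadic second-order logic (CMSO) and then invoke the algorithmic metatheorems for graphs of bounded treewidth.

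First I would observe that a vertex set $S\subseteq V(G)$ induces an Euler subgraph of $G$ if and only if $G[S]$ is connected and every vertex of $S$ has even degree in $G[S]$. Both conditions can be written as a single CMSO formula $\psi(S)$ with the free set variable $S$: connectivity of $G[S]$ is the standard MSO sentence stating that for every $Y$ with $\emptyset\neq Y\subsetneq S$ there is an edge of $G$ with one endpoint in $Y$ and the other in $S\setminus Y$; and the even-degree condition is
$$\forall v\Bigl(v\in S\to\exists N\bigl(\forall w\,(w\in N\leftrightarrow(w\in S\wedge w\sim v))\ \wedge\ |N|\equiv 0\pmod 2\bigr)\Bigr),$$
where the modular-cardinality atom $|N|\equiv 0\pmod 2$ is exactly the feature that CMSO adds to plain MSO. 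This counting is genuinely necessary, since the property that all degrees are even is not MSO-definable. Note that $\psi$ uses only vertex and vertex-set quantification.

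Next, since $k$ is part of the input, rather than hard-wiring $|S|\ge k$ into the formula I would appeal to the cardinality-optimization strengthening of Courcelle's theorem (Arnborg--Lagergren--Seese; Courcelle--Mosbah): for a CMSO-definable property $\psi(S)$ of a vertex set, the maximum of $|S|$ over all $S$ satisfying $\psi$ is computable in time $g(t)\cdot n$ on an $n$-vertex graph of treewidth at most $t$, where a tree decomposition of width $t$ is either supplied with the input or first computed in linear time. Applying this to the formula above returns the largest number of vertices of an induced Euler subgraph of $G$, and comparing that number with $k$ settles the instance; for every fixed $t$ this runs in linear time.

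I do not expect a serious obstacle here: the one point to get right is that the even-degree requirement must be phrased with the modular-counting atom rather than in plain MSO, after which the lemma is an immediate consequence of the metatheorems. If one prefers to avoid the black box, I would instead describe a direct bottom-up dynamic program over a width-$t$ tree decomposition whose state at a bag records which bag vertices lie in $S$, the parity of the current degree of each of them, and the partition of the chosen bag vertices according to the connected components of the partial solution built so far; the number of states is bounded in terms of $t$ only, each table entry keeps the maximum number of selected vertices realising that state, and the answer is read off at the root subject to having a single component.
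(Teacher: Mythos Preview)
Your proposal is correct and follows essentially the same approach as the paper: both express ``$S$ induces an Euler subgraph'' in CMSO via the standard connectivity sentence together with a modular-counting predicate for the even-degree condition, and then invoke the optimization variant of Courcelle's theorem to maximize $|S|$ on graphs of bounded treewidth. Your additional sketch of a direct dynamic program is a bonus not present in the paper, but the core argument is the same.
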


\begin{proof}
Recall that the syntax of the \emph{monadic second-order logic} (MSOL) of graphs includes logical connectives $\vee$, $\land$, $\neg$, 
variables for vertices, 
edges, sets of vertices and edges, 
and quantifiers $\forall$, $\exists$ that can be applied to these variables. 
Besides the standard relations $=$, $\in$, $\subseteq$, 
the syntax includes the relation
${\adj}(u,v)$ for two vertex variables, which expresses  whether two vertex $u$ and $v$ are adjacent, and for a vertex variable $v$ and an edge variable $e$, we have the relation $\inc(v,e)$ which expresses that $v$ is incident with $e$. The \emph{counting monadic first-order logic} (CMSOL) is an extension of MSOL with the additional predicate $\card_{p,q}(X)$ which expresses whether the cardinality of a set $X$ is $p$ modulo $q$.

By the celebrated Courcelle's theorem~\cite{Courcelle92}, any problem that can be expressed in CMSOL can be solved in linear time for graphs of bounded treewidth. Moreover, this result holds also for optimization  problems.  

Observe that to solve {\sc Large Euler Subgraph} 
for a graph $G$, it is sufficient to find a subset of vertices $U$ of maximum size such that $U$ induces an Euler graph.
Clearly, $U$ induces an Euler graph if and only if i) $G[U]$ is connected and ii) each vertex of $G[G]$ has even degree. The both properties can be expressed in CMSOL. The standard way to express connectivity is to notice that $G[U]$ is connected if and only if for any $X\subset U$, $X\neq\emptyset$ and $X\neq U$, there is an edge $\{x,y\}\in E(G)$ such that $x\in X$ and $y\in U\setminus X$. Then we have to express the property that for any $u\in U$, $d_{G[U]}(u)=|N_{G[U]}(u)|$ is even. To do it, it is sufficient to observe that $X=N_{G[U]}(u)$ if and only if $X\subseteq U$ such that i) for any $v\in X$, $\{u,v\}\in E(G)$, and ii) for any $v\in U$ such that $\{u,v\}\in E(G)$, $v\in X$. Since we can express in CMSOL whether $|N_{G[U]}(u)|$ is even, the claim follows. 
\end{proof}

Now we can prove the main result of this section. 

\begin{theorem}\label{thm:fpt-vertex}
For any positive integer $k$, {\sc Large Euler Subgraph} can be solved in linear time for undirected graphs.
\end{theorem}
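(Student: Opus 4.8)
The plan is to combine the structural dichotomy developed in Lemmas~\ref{lem:tw-bound} and~\ref{lem:cmsol} into a win/win argument. Fix the parameter $k$. If $k\leq 3$, then {\sc Large Euler Subgraph} is trivial: a graph has an induced Euler subgraph on at least one vertex (a single vertex), on at least two vertices never (two vertices joined by an edge have odd degree, and we need connectivity), and on three vertices iff it contains a triangle, which is checkable in polynomial time; so assume $k\geq 4$. Set the threshold $\tau = k(\Delta_k-1)+2$, a constant depending only on $k$. Given the input graph $G$, first test whether $\tw(G) > \tau$.

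For the treewidth test I would invoke a constant-factor (or exact) FPT algorithm for treewidth — e.g. Bodlaender's linear-time algorithm, or more simply any algorithm that in time $f(\tau)\cdot n$ either reports $\tw(G)\le c\cdot\tau$ together with a tree decomposition of that width, or correctly reports $\tw(G) > \tau$. In the latter case, Lemma~\ref{lem:tw-bound} immediately gives that $G$ has an induced Euler subgraph on at least $k$ vertices, so we output YES. In the former case we have a tree decomposition of $G$ of width bounded by a function of $k$ alone, and then Lemma~\ref{lem:cmsol} solves {\sc Large Euler Subgraph} in linear time. Either branch runs in time $g(k)\cdot n$ for some function $g$, which establishes fixed-parameter tractability; and since both the treewidth subroutine and the CMSOL model-checking step are linear in $n$ for fixed $k$, the overall running time is linear, as claimed.

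The one point that needs care — and which I expect to be the only real obstacle — is the treewidth step: Bodlaender's exact linear-time algorithm works, but if one prefers a simpler constant-approximation (which still runs in $2^{O(\tau)}n$ time and suffices here), one must make sure the approximation guarantee is compatible with the threshold, i.e. that "the approximation returns width $> \tau$" still implies "$\tw(G) > \tau$", which holds automatically since any returned decomposition is a genuine tree decomposition. Apart from this bookkeeping, the proof is just the assembly: \emph{large treewidth certifies a large induced Euler subgraph} (Lemma~\ref{lem:tw-bound}), and \emph{small treewidth lets us decide the problem exactly} (Lemma~\ref{lem:cmsol}). I would also remark that the function $g(k)$ here is enormous — it is governed by the tower-type growth of $f(\ell)$ and the resulting $\Delta_k$ — but the statement only claims FPT (indeed linear-time for fixed $k$), so no effort is spent optimizing it.
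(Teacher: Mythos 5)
Your main argument is exactly the paper's: test whether $\tw(G)>k(\Delta_k-1)+2$ with Bodlaender's linear-time algorithm, answer YES via Lemma~\ref{lem:tw-bound} if the treewidth is large, and otherwise decide the problem on the bounded-treewidth graph via the CMSOL formulation of Lemma~\ref{lem:cmsol}; your remark that an approximation algorithm would also suffice is a harmless variation. The only genuine error is in your base case: for $k=3$ the correct test is whether $G$ contains \emph{any} cycle (a shortest, hence induced, cycle is an induced Euler subgraph on at least three vertices), not whether it contains a triangle --- a triangle-free graph containing a $C_5$ is a YES-instance that your test would reject. The same point disposes of $k\le 2$ as in the paper (any Euler subgraph has at least three vertices, so the answer for all $k\le 3$ reduces to detecting a cycle), after which your $k\ge 4$ argument goes through unchanged.
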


\begin{proof}
Clearly, we can assume that $k\geq 3$, as any Euler graph has at least three vertices. If $k=3$, then we can find any shortest cycle in the input graph $G$. It is straightforward to see that if $G$ has no cycles, then we have no Euler subgraph, and any induced cycle is an induced Euler subgraph on at least three vertices. Hence, it can be assumed that $k\geq 4$. We check in linear time whether $\tw(G)> k(\Delta_k-1)+2$ using the Bodlaender's algorithm~\cite{Bodlaender96}. If it is so, we solve our problem using Lemma~\ref{lem:cmsol}. Otherwise, by Lemma~\ref{lem:tw-bound}, we conclude that $G$ has an induced Euler subgraph on at least $k$ vertices and return a YES-answer.  
\end{proof}

Notice, that the proof of Theorem~\ref{thm:fpt-vertex} is not constructive. 
In the conclusion of this section we sketch the algorithm that produces an  induced Euler subgraph on at least $k\geq4$ vertices if it exists. 

First, for each $\ell\geq 2$, we can test existence of two vertices $s,t$ such that the input graph $G$ has at least $f(\ell)$ internally vertex-disjoint $(s,t)$-paths of length at most $\ell$ in \classFPT~time with the parameter $\ell$ using the color coding technique~\cite{GolovachT11}. If we find such a structure for $\ell\leq 3k-8$, we find an induced Euler subgraph with at least $k$ vertices that is either a clique or a union of $(s,t)$-paths as it is explained in the proof of Lemma~\ref{lem:connect}.

Otherwise, we find all 2-connected components. If there is a 2-connected component $G'$ with a vertex $u$ with $d_{G'}(u)>\Delta_k$, then we find an induced Euler subgraph with at least $k$ vertices that is a union of cycles with the common vertex $u$ using the arguments form the proof of Lemma~\ref{lem:deg}.

If all 2-connected components have bounded maximum degrees, we use the algorithm of Kosowski et al.~\cite{KosowskiLNS12} that in polynomial time either finds an induced cycle on at least $k$ vertices or constructs a tree decomposition of width at most $k(\Delta_k-1)+2$. In the fist case we have an induced Euler subgraph on at least $k$ vertices. In the second case the treewidth is bounded, and 
{\sc Large Euler Subgraph}  is solved by a dynamic programming algorithm instead of applying Lemma~\ref{lem:cmsol}.

\subsection{Large Euler subgraphs for directed graphs}\label{sec:vertex-dir}
In this section we show that 
{\sc Euler $k$-Subgraph} and {\sc Large Euler Subgraph} are hard for directed graphs. 

First, we consider {\sc Euler $k$-Subgraph}. It is straightforward to see that this problem is in \classXP, since we can check for every subset of $k$ vertices, whether it induces an Euler subgraph. We prove that this problem cannot be solved in $\classFPT$~ time unless $\classFPT=\classW{1}$. 

\begin{theorem}\label{thm:w-hard}
The   {\sc Euler $k$-Subgraph} is \classW{1}-hard for directed graphs.
\end{theorem}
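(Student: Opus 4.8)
The plan is to reduce from a known \classW{1}-hard problem on undirected graphs and exploit the flexibility of arc orientations to control degrees. A natural candidate is \textsc{Multicolored Clique} (or equivalently \textsc{Clique} parameterized by solution size): given a graph $H$ and integer $\kappa$, decide whether $H$ has a clique of size $\kappa$. I would construct a directed graph $G$ together with a parameter $k = g(\kappa)$ for some computable function $g$, such that $H$ has a $\kappa$-clique if and only if $G$ has an induced directed Euler subgraph on exactly $k$ vertices. The key gadget idea: each potential clique vertex is represented by a small directed cycle or bidirected vertex gadget (so that it is "balanced" in isolation, $d^-=d^+$), and each edge of $H$ is represented by a vertex (or short path) whose in-/out-neighbors are exactly the two endpoint gadgets; orienting these connector arcs carefully ensures that a set of $\kappa$ vertex-gadgets together with the $\binom{\kappa}{2}$ edge-connectors between them induces a graph in which every vertex has equal in- and out-degree, hence (being weakly connected) is Eulerian. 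The target size $k$ is set so that the only way to reach $k$ induced vertices with the balance condition is to select a "consistent" set, i.e. a clique.

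The steps, in order: (1) fix the source problem and recall it is \classW{1}-hard; (2) describe the vertex-selection gadgets and the edge/connector gadgets, specifying all arc orientations, and argue each in isolation is degree-balanced; (3) prove the forward direction — a $\kappa$-clique in $H$ yields an induced subgraph of $G$ of the prescribed size $k$ that is weakly connected and degree-balanced, hence Eulerian; (4) prove the converse — any induced Euler subgraph of $G$ on $\geq k$ (or exactly $k$) vertices must, by the balance constraints forced on partially-selected gadgets, correspond to a full selection of $\kappa$ mutually compatible vertex-gadgets plus all connectors among them, which is precisely a $\kappa$-clique; (5) check that $k$ depends only on $\kappa$ and that $G$ is constructible in polynomial time, completing the FPT-reduction. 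One should also note this simultaneously strengthens the picture in Table~\ref{tabl:compl} and is the directed analogue of the undirected \classW{1}-hardness of Cai and Yang.

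I expect the main obstacle to be the converse direction: I must design the gadgets so that a partially-chosen vertex gadget, or a connector whose two endpoint gadgets are not both fully present, creates an unavoidable imbalance (some vertex with $d^- \neq d^+$), forcing any induced Euler subgraph to respect the intended combinatorial structure. This typically requires "enforcer" arcs or small rigid subgadgets whose deletion propagates imbalance, and one must verify no unintended Euler subgraph of size $\geq k$ sneaks in by exploiting slack elsewhere in $G$ (for instance, a large Euler subgraph living entirely inside one oversized gadget). Keeping every gadget small — of size bounded by a function of $\kappa$ — and making the "reward" for selecting the clique-encoding subgraph strictly larger than any locally achievable alternative is the delicate part; the rest is routine bookkeeping of in- and out-degrees.
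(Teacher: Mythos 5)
Your reduction starts from the right source problem (the paper also reduces from \textsc{Multicolored $k$-Clique}), but the proposal stops exactly where the work begins: the gadgets are never specified, and the enforcement mechanism you rely on --- degree imbalance propagating through partially selected gadgets --- is not shown to deliver the converse direction, which you yourself flag as the delicate part. Two concrete difficulties. First, because this is an \emph{induced} subgraph problem, once a vertex set $U$ is chosen all arcs inside $U$ come with it; the balance condition $d^-=d^+$ can force a connector to drag in its two endpoint gadgets, but nothing in it forces two selected vertex gadgets to be \emph{adjacent} in the source graph. Ruling out a selection of $\kappa$ vertex gadgets that are not pairwise adjacent (together with whichever connectors happen to exist) is left entirely to weak connectivity and the exact count $k=\kappa s+\binom{\kappa}{2}$, and you give no argument that other mixes of whole gadgets, partial gadgets, and connectors cannot also hit that count while staying balanced and connected. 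Second, you explicitly leave open the possibility of an unintended Euler subgraph living inside or across gadgets. As written, steps (2) and (4) are placeholders, so this is a plan rather than a proof.

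For contrast, the paper's construction avoids edge gadgets and balance bookkeeping altogether. It encodes the \emph{complement}: for $i<j$ and each non-edge $\{u,v\}$ with $u\in V_i$, $v\in V_j$, it adds the arc $(u,v)$; it then builds a ring $x_i\to V_i\to y_i\to x_{i+1}$ closed by the single back arc $(y_k,x_1)$, and sets $k'=3k$. Deleting $(y_k,x_1)$ leaves a DAG, so every directed cycle uses that arc; since an Eulerian digraph is an arc-disjoint union of directed cycles, any induced Euler subgraph on $3k$ vertices must be a single induced directed cycle of the form $x_1,v_1,y_1,\ldots,x_k,v_k,y_k$ with one $v_i\in V_i$ per part, and the ``induced'' requirement forces the $v_i$ to be pairwise non-adjacent in the constructed digraph, i.e., a clique in the original graph. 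The complement-plus-forced-induced-cycle trick is precisely the adjacency-enforcing mechanism your sketch is missing; if you want to pursue the vertex/edge-gadget route instead, you would need to invent an explicit substitute for it.
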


\begin{proof}
We reduce  the {\sc Multicolored $k$-Clique} problem that is known to be \classW{1}-hard~\cite{FellowsHRV09}:

 \defparproblem{{\textsc{Multicolored $k$-Clique}}}{A $k$-partite graph $G=(V_1\cup\dots\cup V_k,E)$, where $V_1,\ldots,V_k$  are sets of the $k$-partition}{$k$}{Does  $G$ has a clique with $k$ vertices?}

Let $G=(V_1\cup\dots\cup V_k,E)$ be an instance of {\sc Multicolored $k$-Clique}. We construct the directed graph $H$ as follows.
\begin{itemize}
\item Construct the copies of $V_1,\ldots,V_k$.
\item For $1\leq i<j\leq k$ and for each  $u\in V_i$ and $v\in V_j$, if $\{u,v\}\notin E(G)$, then construct an arc $(u,v)$ for the copies of $u$ and $v$ in $H$.
\item For each $i\in\{1,\ldots,k\}$, construct two vertices $x_i,y_i$, then join $x_i$ by arcs with all vertices of $V_i$ and join every vertex of $V_i$ with $y_i$ by an arc.
\item Construct arcs $(y_1,x_2),(y_2,x_3),\ldots,(y_k,x_1)$.
\end{itemize}
We set $k'=3k$. 

We claim that $G$ has a clique with $k$ vertices if and only if $H$ has an induced Euler subgraph with at least $k'$ vertices.

Let $K=\{v_1,\ldots,v_k\}$ be a clique in $G$, $v_i\in V_i$ for $i\in\{1,\ldots,k\}$. Observe that 
$v_1,\ldots,v_k$
are pairwise non-adjacent in $H$. Hence, the set of vertices $\{x_1,v_1,y_1,\ldots,x_k,v_k,y_k\}$ induces a cycle in $H$. Hence, we have an induced Euler subgraph with at least $k'$ vertices.

Suppose now that $H$ has an induced Euler subgraph $C$ with at least $k'$ vertices. Observe that every directed cycle in $G$ contains the arc $(y_k,x_1)$, because if we delete this arc, we obtain a directed acyclic graph. Since any Euler directed graph is a union of arc-disjoint directed cycles (see, e.g.,  \cite{Fleischner90}), $C$ is an induced directed cycle. Moreover, for each $i\in\{1,\ldots,k\}$, $C$ contains at most one vertex of $V_i$. Indeed, assume that two vertices $u,v$ of $C$ are in the same set $V_i$. Then the $(u,v)$-paths and the $(v,u)$-path in $C$ should contain $(y_k,x_1)$, but this is impossible. Because $C$ has $3k$ vertices, we conclude that $C$ contains exactly one vertex $v_i$ from each $V_i$ for $i\in\{1,\ldots,k\}$, and $x_i,y_i\in V(C)$ for $i\in\{1,\ldots,k\}$. Then $C=H[\{x_1,v_1,y_1,\ldots,x_k,v_k,y_k\}]$. Since $C$ is an induced cycle, $v_1,\ldots,v_k$ are pairwise non-adjacent in $H$. Then $\{v_1,\ldots,v_k\}$ is a clique in $G$. 
\end{proof}

For {\sc Large Euler Subgraph} for directed graphs, we prove that this problem is \classParaNP-complete.

\begin{theorem}\label{thm:paranp}
For any $k\geq 4$, {\sc Large Euler Subgraph} is \classNP-complete for directed graphs.
\end{theorem}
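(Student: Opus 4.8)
For any $k \geq 4$, {\sc Large Euler Subgraph} is \classNP-complete for directed graphs.

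The plan is to reduce from a suitable \classNP-complete problem in a way that produces a bounded-size "gadget" answer — since here the parameter $k$ is fixed, we cannot let the size of the target subgraph grow with the instance; the whole YES-certificate must be a directed Euler subgraph on exactly $k$ (or at least $k$, but effectively a bounded number of) vertices. This is the crucial difference from Proposition~\ref{prop:NPc}, where $k$ was part of the input. So the reduction must arrange that $H$ has an induced Euler subgraph on $\geq k$ vertices if and only if some small combinatorial substructure exists in the source instance. The natural candidate substructure for a fixed small $k$ is a short directed cycle or a small fixed pattern; the most economical choice is to reduce from a problem about directed cycles through prescribed vertices, or — following the flavor of the W[1]-hardness proof above — from a constraint-satisfaction-style problem where a valid "choice" assembles into one short induced directed cycle. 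Concretely, I would aim to encode something like: given a directed graph and a collection of "slots," decide whether one can pick one vertex per slot so that the picked vertices, together with fixed connector vertices $x_i,y_i$, form an induced directed cycle — exactly the construction of Theorem~\ref{thm:w-hard}, but now with a \emph{constant} number $k$ of slots, reducing from an \classNP-complete rather than \classW{1}-hard source.

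The cleanest route: reduce from {\sc $3$-SAT} (or an \classNP-complete variant of {\sc Not-All-Equal $3$-SAT} or {\sc $1$-in-$3$-SAT}) by building, for each clause or each variable, a constant number of vertices, then forcing any Euler subgraph on $\geq k$ vertices to be an induced directed cycle that threads through exactly one "literal vertex" per clause and is consistent across variables. As in the proof above, one reserves a single arc $(y_k,x_1)$ (an "acyclicity bottleneck") so that removing it makes $H$ a DAG; since every Euler directed graph is an edge-disjoint union of directed cycles and hence, once it must use the unique bottleneck arc at most once, is forced to be a \emph{single} induced directed cycle, one gets tight control: $C$ passes through each "layer" at most once, and counting $|V(C)| \geq k = 3k_0$ (say) pins down that it uses exactly one vertex from each layer. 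The non-adjacency of the chosen literal vertices in $H$ — which is what "induced cycle" buys — then translates back into the satisfying assignment being consistent (no complementary literals both chosen, clauses satisfied). Membership in \classNP is immediate: an induced Euler subgraph on $\geq k$ vertices is a polynomial-size certificate, and checking that a given vertex subset induces a connected subgraph with in-degree equal to out-degree at every vertex is polynomial.

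The hard part will be making the degree-balance constraint ($d^-(v) = d^+(v)$ for all $v$ in the induced subgraph) work \emph{for a fixed $k$} while still encoding an arbitrarily large SAT instance: since $C$ is forced to have only $k$ vertices, essentially all of the instance's structure must be pushed into the \emph{arc set} among a bounded vertex set, and that seems impossible directly — so the real trick must be that the bounded induced Euler subgraph lives inside a large host graph and the large instance constrains \emph{which} $k$-subsets are induced-Euler via the presence/absence of arcs. I would therefore design the host $H$ so that its vertices split into $k$ groups $V_1,\dots,V_k$ of unbounded size plus fixed connectors, arcs within/between groups encoding "incompatibility" (absence of arc $=$ the induced-cycle can use this pair), exactly mirroring Theorem~\ref{thm:w-hard} but with $k$ constant; the source problem then needs to be an \classNP-complete "pick one element per part avoiding forbidden pairs" problem with a \emph{constant} number of parts — e.g. {\sc Multicolored Independent Set} / {\sc Partitioned Subgraph Isomorphism} specialized, or more safely a direct reduction from {\sc $3$-Coloring} or {\sc $3$-SAT} where the constant number of parts corresponds to the constant $3$. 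Verifying that no \emph{smaller} induced Euler subgraph (on fewer than $k$ vertices, e.g. a triangle) accidentally appears — and handling the "$\geq k$" slack so that larger accidental Euler subgraphs do not give spurious YES answers — is the main bookkeeping obstacle; one controls it by making all groups independent sets, keeping connectors of in/out-degree constrained so the bottleneck-arc argument still forces a single cycle, and choosing the target $k$ to exactly match "one per part plus all connectors."
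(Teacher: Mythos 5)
Your proposal rests on a premise that is not only different from the paper's but actually self-defeating: you assume that because $k$ is a fixed constant, the witness must be an induced Euler subgraph on ``effectively a bounded number of'' vertices, and you then try to engineer a reduction in which a bounded-size induced cycle encodes the answer. If that plan could be carried out --- if a YES-instance were always certified by an induced Euler subgraph on at most $g(k)$ vertices --- then for fixed $k$ one could decide the problem in polynomial time by enumerating all vertex subsets of size at most $g(k)$, contradicting the very \classNP-hardness you are trying to prove. The same obstruction surfaces concretely in your choice of source problem: ``pick one element per part avoiding forbidden pairs'' with a \emph{constant} number of parts is solvable in time $O(n^{k})$ by brute force, and neither {\sc $3$-SAT} nor {\sc $3$-Coloring} can be recast as such a problem (in {\sc $3$-Coloring} the number of ``slots'' is the number of vertices, not $3$). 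So the route you sketch cannot be completed.

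The paper's reduction goes the opposite way: the condition ``at least $k$ vertices'' is exploited precisely because the witness is allowed to be \emph{huge}. From a {\sc $3$-SAT} instance (each variable occurring exactly four times) one builds a directed graph whose backbone is a chain of variable gadgets (two parallel paths, one per truth value) and clause gadgets (three parallel paths, one per literal), closed into a single long cycle; literal--clause consistency is enforced by extra arc pairs that would destroy the in-degree/out-degree balance of any induced Euler subgraph using both a literal path and an ``incompatible'' clause path. A satisfying assignment yields an induced directed cycle on $4(n+m)\geq k$ vertices, and conversely any induced Euler subgraph on at least $4$ vertices is forced, by a propagation argument starting from the clause-gadget connectors, to traverse the entire backbone and hence to decode to a satisfying assignment. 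The role of the hypothesis $k\geq 4$ is exactly to exclude the only small ``accidental'' Euler subgraphs in the construction, namely certain triangles --- which is consistent with Proposition~\ref{prop:poly} showing the problem is polynomial for $k\leq 3$. The key idea missing from your write-up is this forcing of an unboundedly large witness; without it, no reduction for fixed $k$ can succeed.
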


\begin{proof}
We reduce to the {\sc 3-Satisfiability} problem. It is known~\cite{ECCC-TR03-049} that this problem is \classNP-complete even if each variable is used in exactly 4 clauses: 2 times in positive and 2 times in negations. 

\begin{figure}[ht]
\centering\scalebox{0.7}{\input{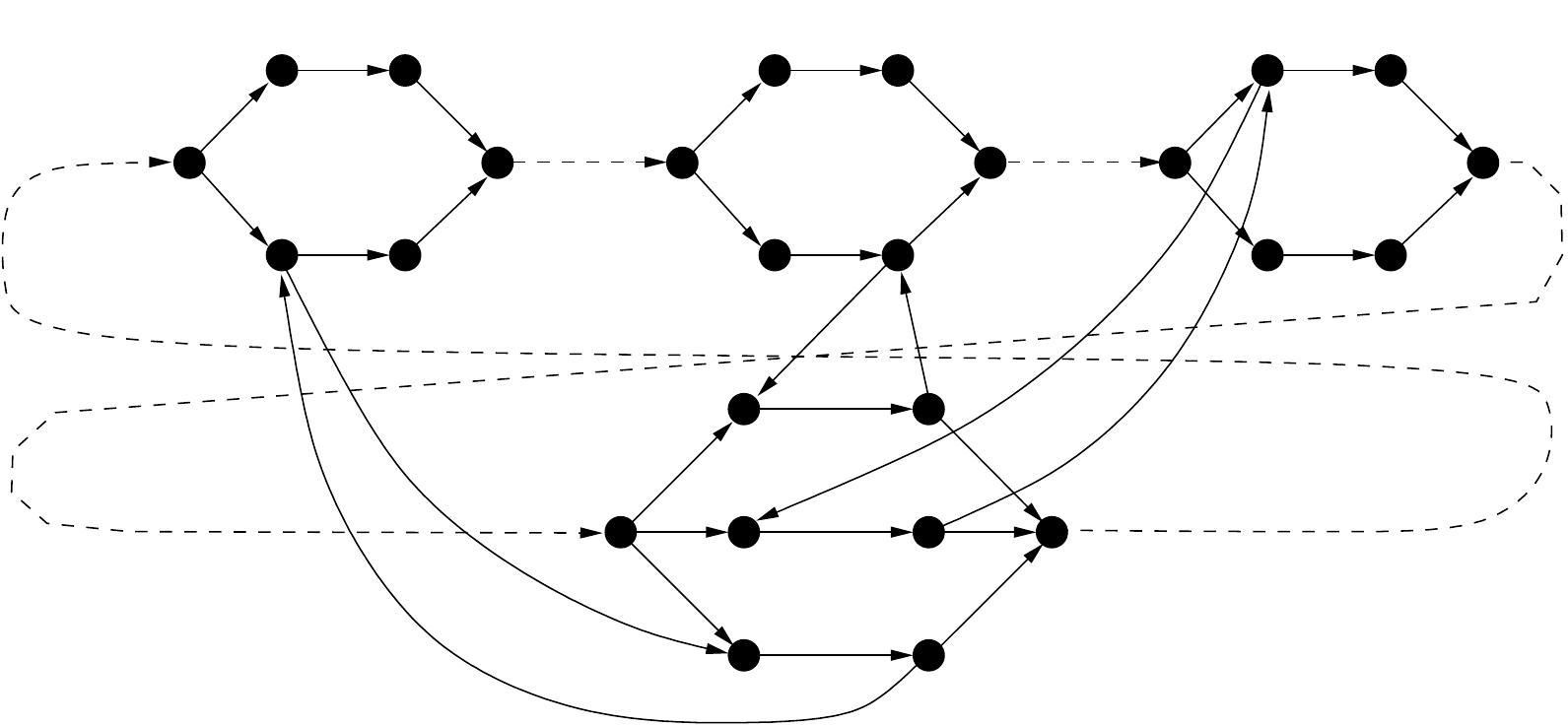_t}}
\caption{Construction of $G$. 
\label{fig:npc}}
\end{figure}

Suppose that Boolean variables $x_1,\ldots,x_n$ and clauses $C_1,\ldots,C_m$ compose an instance of  {\sc 3-Satisfiability} such that each variable is used in exactly 4 clauses: 2 times in positive and 2 times in negations. Let $\phi=C_1\land\ldots\land C_m$. Without loss of generality we assume that $k\leq 4(n+m)$.  We construct the directed graph $G$ as follows (see Fig.~\ref{fig:npc}).
\begin{itemize}
\item For each $i\in\{1,\ldots,n\}$, construct vertices $y_i,y_i',x_i^1,x_i^2,\overline{x}_i^1,\overline{x}_i^2$ and arcs $(y_i,x_i^1), (x_i^1,x_i^2),(x_i^2,y_i')$,  $(y_i,\overline{x}_i^1), (\overline{x}_i^1,\overline{x}_i^2),(\overline{x}_i^2,y_i')$.
\item For each $i\in\{2,\ldots,n\}$, construct $(y_{i-1}',y_i)$.
\item For each $j\in\{1,\ldots,m\}$, construct vertices $z_j,z_j',u_j^1,u_j^2,u_j^3,v_j^1,v_j^2,v_j^3$ and arcs $(z_j,u_j^1)$, $(u_j^1,v_j^1)$, $(v_j^1,z_j')$, $(z_j,u_j^2)$, $(u_j^2,v_j^2)$, $(v_j^2,z_j')$, 
$(z_j,u_j^3)$, $(u_j^3,v_j^3)$, $(v_j^3,z_j')$.
\item For each $j\in\{2,\ldots,m\}$, construct $(z_{j-1}',z_j)$.
\item Construct $(y_n',z_1),(z_m',y_1)$.
\item For each $j\in \{1,\ldots,m\}$, let $C_j=w_1\vee w_2\vee w_3$. For $h\in\{1,2,3\}$,
\begin{itemize}
\item if $w_h=x_i$ for some $i\in\{1,\dots,n\}$ and $w_h$ is the fist occurrence of the literal $x_i$ in $\phi$, then construct $(x_i^1,u_j^h),(v_j^h,x_i^1)$;
\item if $w_h=x_i$ for some $i\in\{1,\dots,n\}$ and $w_h$ is the second occurrence of the literal $x_i$ in $\phi$, then construct $(x_i^2,u_j^h),(v_j^h,x_i^2)$;
\item if $w_h=\overline{x}_i$ for some $i\in\{1,\dots,n\}$ and $w_h$ is the fist occurrence of the literal $\overline{x}_i$ in $\phi$, then construct $(\overline{x}_i^1,u_j^h),(v_j^h,\overline{x}_i^1)$;
\item if $w_h=\overline{x}_i$ for some $i\in\{1,\dots,n\}$ and $w_h$ is the second occurrence of the literal $\overline{x}_i$ in $\phi$, then construct $(\overline{x}_i^2,u_j^h),(v_j^h,\overline{x}_i^2)$.
\end{itemize}
\end{itemize}

We claim that $\phi$ can be satisfied if and only if $G$ has an induced Euler subgraph with at least $k$ vertices.

For $i\in\{1,\ldots,n\}$, let $P_i=\{y_i,x_i^1,x_i^2,y_i'\}$ and $\overline{P}_i=\{y_i,\overline{x}_i^1,\overline{x}_i^2,y_i'\}$. For $j\in\{1,\ldots,n\}$ and $h\in\{1,2,3\}$, $Q_j^h=\{z_j,u_j^h,v_j^h,z_j'\}$. Let also 
$Z_j=\{z_1,\ldots,z_m\}$ and $Z_j'=\{z_1',\ldots,z_m'\}$.

Suppose that the variables $x_1,\ldots,x_n$ have values such that $\phi$ is satisfied. We construct the set of vertices $U$ as follows:
\begin{itemize}
\item for $i\in\{1,\ldots,n\}$, if $x_i=true$, then the vertices of the set $P_i$ are included in $U$, and $\overline{P}_i$ is included otherwise;
\item for $j\in\{1,\ldots,m\}$, if $C_j=w_1\vee w_2\vee w_3$, then we choose a literal $w_h=true$ in $C_j$ and include $Q_j^h$ in $U$.
\end{itemize} 
Observe that $U$ induces a cycle in $G$ and $|U|=4(n+m)\geq k$. Then $G[U]$ is an induced Euler subgraph on at least $k$ vertices.

Assume now that a set $U\subseteq V(G)$ induces an Euler graph $H=G[U]$ and $|U|\geq k$. 

Since $k\geq 4$,  $U\cap (Z\cup Z')\neq \emptyset$ because $G-(Z\cup Z')$ has no cycles except vertex-disjoint triangles.  Suppose that some $z_j\in U$. 
If $j=1$, then $y_n'\in U$, and if $j>1$, then $z_{j-1}'\in U$.  Also exactly one of the vertices $u_j^1,u_j^2,u_j^3$ is in $U$. Let $u_j^h\in U$. Then $v_j^h\in U$. Further, either $z_j'\in U$ or 
some vertex $x_i^p\in U$ or $\overline{x}_i^p\in U$. But if $x_i^p\in U$ or $\overline{x}_i^p\in U$, $d_{H}^-u_j^h=2>1=d_H^+(u_j^h)$.  Hence, $z_j'\in U$. If $j=m$, then $y_1\in U$, and if $j<m$, then $z_{j+1}\in U$.   
Suppose now that some $z_j'\in U$. 
If $j=m$, then $y_1\in U$, and if $j<m$, then $z_{j+1}\in U$.  Also exactly one of the vertices $v_j^1,v_j^2,v_j^3$ is in $U$. Let $v_j^h\in U$. 
Then $u_j^h\in U$. Further, either $z_j\in U$ or 
some vertex $x_i^p\in U$ or $\overline{x}_i^p\in U$. But if $x_i^p\in U$ or $\overline{x}_i^p\in U$, $d_{H}^-v_j^h=1<2=d_H^+(v_j^h)$.  Hence, $z_j\in U$. If $j>1$, then $z_{j-1}'\in U$.  We have that for each $j\in \{1,\ldots,m\}$, exactly one $Q_h\subseteq U$ for $h\in\{1,2,3\}$, and for $h'\in\{1,2,3\}\setminus\{h\}$, $u_j^{h'},v_j^{h'}\notin U$. 
Also we have that $y_1\in U$.

Now we consequently consider $i=1,\ldots,n$. We already know that $y_1\in U$. Assume inductively that $y_i\in U$. Then exactly one of the vertices $x_i^1,\overline{x}_i^1$ is in $U$. 
Assume that  $x_i^1$ is in $U$ as another case is symmetric. Then either $x_i^2$ or some vertex $u_j^h$ should be in $U$. But if  $u_j^h\in U$, then because $z_j\in U$, $d_H^-(u_j^h)=2>d_H^+(u_j^h)$ and this is impossible for an Euler graph. Then $x_j^2\in U$. By the same arguments we show that $y_i'\in U$. If $i<n$, then $y_{i+1}\in U$, and we can proceed with our inductive arguments. We conclude that  for each $i\in \{1,\ldots,n\}$, either $P_i\subseteq U$ or $\overline{P}_i\subseteq U$, and if $P_i\subseteq U$ ( $\overline{P}_i\subseteq U$ resp.), then $\overline{x}_i^1,\overline{x}_i^2\notin U$ ($x_i^1,x_i^2\notin U$ resp.). Moreover, if $P_i\subseteq U$ ($\overline{P}_i\subseteq U$ resp.)  and $Q_j^h\subseteq U$, then $H$ has no arcs between the vertices of these two sets.  

We define the truth assignment for the variables $x_1,\ldots,x_n$ as follows:
for each $i\in \{1,\ldots,n\}$, if $P_i\subseteq U$, then $x_i=false$, and $x_i=true$ otherwise. We claim that $\phi$ is satisfied by this assignment.
To show it, consider a clause $C_j=w_1\vee w_2\vee w_3$. We know that there is $h\in\{1,2,3\}$ such that $Q_j^h\subseteq U$. Assume that $w_h=x_i$ (the case $w_h=\overline{x}_i$ is symmetric). Then $Q_i^h$ is joined by arcs in $G$ with $P_i$.  It follows that $P_i$ was not included in $U$, i.e., $\overline{P}_i\subseteq U$ and $x_i=true$. Hence, $w_h=true$. 
\end{proof}

We proved that  {\sc Large Euler Subgraph} is \classNP-complete for directed graphs for $k\geq 4$. In the conclusion of this section we observe that the bound $k\geq 4$ is tight unless \classP$=$\classNP.

\begin{proposition}\label{prop:poly}
 {\sc Large Euler Subgraph} can be solved in polynomial time for $k\leq 3$.
\end{proposition}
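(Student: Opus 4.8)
The plan is to treat the cases $k=1,2,3$ separately; for $k\geq 4$ the problem is already \classNP-complete by Theorem~\ref{thm:paranp}. A directed Euler subgraph with at least two vertices is connected, balanced, and has an arc, hence contains a directed cycle; conversely a shortest directed cycle of $G$ of length at least three is automatically chordless --- any chord would close a strictly shorter directed cycle --- so it induces an Euler subgraph, and a digon (the two opposite arcs on a pair of vertices) is an induced Euler subgraph on two vertices. Consequently, for $k\leq 2$ (and for $k=1$, where the answer is immediate in any reasonable convention) it suffices to test whether $G$ is acyclic, which is linear time. If one insists that a digraph never carry both $uv$ and $vu$, then already every shortest directed cycle has length at least three, and the whole range $k\leq 3$ collapses to the acyclicity test; below I assume the more standard convention in which digons are allowed, which is the genuinely non-trivial setting.

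For $k=3$ I would first compute a shortest directed cycle $C$. If $G$ is acyclic, answer \textsc{no}; if $|V(C)|\geq 3$, then $C$ is chordless, so $G[V(C)]$ is an induced Euler subgraph on at least three vertices and we answer \textsc{yes}. The only remaining regime is that $G$ contains a digon but its shortest directed cycle has length two. Here the algorithm would do two further things. First, it tests every vertex subset of some fixed bounded size for inducing an Euler subgraph; in particular it checks all triples --- the $3$-vertex Euler digraphs are, up to isomorphism, exactly the directed triangle, two digons sharing a vertex with no arc between the two remaining endpoints, and the bidirected triangle. Second, it runs a polynomial-time search (organised as a reachability computation in an auxiliary digraph) for a ``closing'' structure: for each digon $\{u,v\}$, an induced arrangement consisting of the digon together with internally disjoint directed $u$-to-$v$ and $v$-to-$u$ paths carrying no further arcs between them, and, in parallel, a search for an induced directed cycle of length at least three (possibly long), which is needed because of examples such as the disjoint union of a digon and a long induced directed cycle.

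The crux of the argument, and the step I expect to be the main obstacle, is the structural lemma that makes the $k=3$ algorithm correct: a characterisation of digraphs possessing an induced Euler subgraph on at least three vertices by polynomially verifiable conditions. In contrast with the undirected case, a minimal such witness can be arbitrarily large and need not be a single cycle --- besides an induced directed $C_\ell$ for any $\ell$, a directed $C_4$ together with a ``diameter'' digon on a non-adjacent pair (and its larger analogues) is an induced Euler subgraph on four vertices that contains neither a smaller induced Euler subgraph on $\geq 3$ vertices nor an induced directed cycle of length $\geq 3$. Hence bounded enumeration alone cannot suffice, and one must show that the existence of \emph{any} such witness forces either one of finitely many bounded-size obstructions or a structure pinned down by the reachability subroutine around a digon (or a one-way arc incident to one). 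Carrying out this case analysis, proving the obstruction list complete, and verifying that the reachability subroutine is polynomial-time and sound and complete, is where essentially all of the work lies; the remaining ingredients --- breadth-first search, cycle detection, and bounded-size subset checks (no treewidth machinery is needed here) --- are routine.
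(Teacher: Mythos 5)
Your handling of $k\le 2$ matches the paper, and your diagnosis of the $k=3$ difficulty is accurate: a minimal induced Euler subgraph on at least three vertices can be arbitrarily large and need not be an induced cycle of $G$. But the proof has a genuine gap exactly where you concede ``essentially all of the work lies'': the structural characterisation that would make your algorithm correct is never established, and the obstruction list you sketch is in fact incomplete. Take a directed cycle $v_1\to v_2\to\cdots\to v_\ell\to v_1$ with $\ell\ge 5$ and add the two digons $\{v_1,v_3\}$ and $\{v_2,v_4\}$. This is an induced Euler subgraph on $\ell$ vertices (each digon adds one to the in- and out-degree of its endpoints), and one checks it is a \emph{minimal} witness: no proper subset of at least three vertices induces an Euler subgraph, there is no induced directed cycle of length at least three (every such cycle has a digon chord), and for each digon the two connecting paths along the big cycle do carry a further arc between their interiors, namely the other digon. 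So bounded enumeration, induced-long-cycle detection, and your digon-plus-two-clean-paths reachability subroutine all fail on this family, and increasing $\ell$ defeats any fixed size bound. Without the completeness proof you defer, the algorithm is not merely unverified; as described it is wrong.

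The missing idea --- and the whole content of the paper's proof --- is to delete the digons globally rather than to search around them. Let $G'$ be $G$ with every pair of opposite arcs removed. If $H=G[U]$ is an induced Euler subgraph with $|U|\ge 3$, then $H$ minus its digons is still balanced at every vertex; if it is nonempty it contains a cycle, which is a cycle of $G'$ since $H$ is induced, and if it is empty then $H$ is a connected union of digons on at least three vertices, which yields three vertices $x,y,z$ carrying digons $\{x,y\}$ and $\{y,z\}$ and either a digon or no arc between $x$ and $z$ --- a bounded witness checkable over all triples. Conversely, a shortest cycle $C$ of $G'$ has length at least three and is chordless in $G'$ (a chord would close a strictly shorter cycle), so $G[V(C)]$ is exactly $C$ together with some digons, hence balanced and connected, i.e.\ an induced Euler subgraph on at least three vertices. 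The algorithm is therefore just: test all triples for the two digon configurations, and otherwise test whether $G'$ has a cycle. No reachability subroutine, no completeness case analysis, and no search for induced long cycles of $G$ (whose polynomial-time detectability you also take for granted) is needed.
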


\begin{proof}
For $k=1$, the problem is trivial. If $k=2$, then any shortest cycle $C$ in a directed graph $G$ is an induced Euler subgraph of $G$ with at least two vertices, and  $G$ has no induced Euler subgraph if $G$ is a directed acyclic graph. Hence, it remains to consider $k=3$.

Suppose that $H$ is an induced Euler subgraph of a directed graph $G$, and $H$ has at least three vertices. Denote by $H'$ the graph obtained from $H$ by the deletion of all pairs of opposite arcs, i.e., for each pair of vertices $x,y$ such that $(x,y),(y,x)\in E(H)$, we delete $(x,y),(y,x)$. Clearly, for any $v\in V(H')$, $d_{H'}^-(v)=d_{H'}^+(v)$. If $H'$ is empty, then because $|V(H)|\geq 3$, there are three distinct vertices $x,y,z\in V(H)$ such that $(x,y),(y,x),(y,z),(z,y)\in E(H)$ and either $(x,z),(z,x)\in E(H)$ or $(x,z),(z,x)\notin E(H)$. Then $G[\{x,y,z\}]$ is an Euler subgraph of $G$. If $H'$ is non-empty, then $H'$ has a shortest cycle $C$. Because $H'$ has no cycles of length two,  $G[V(C)]$ is an induced Euler subgraph with at least three vertices.  

We conclude that {\sc Large Euler Subgraph} can be solved for $k=3$ as follows.
If $G$ has three distinct vertices $x,y,z$ such that $(x,y),(y,x),(y,z),(z,y)\in E(G)$ and either $(x,z),(z,x)\in E(G)$ or $(x,z),(z,x)\notin E(G)$,  then $G[\{x,y,z\}]$ is an Euler subgraph of $G$.
Otherwise, let $G'$ be the graph obtained from $G$ by the deletion of all pairs of opposite arcs. We find a shortest cycle $C$ in $G'$, and we have that $G[V(C)]$ is an induced Euler subgraph with at least three vertices. Finally, if $G'$ is a directed acyclic graph, then we return a NO-answer. 
\end{proof}

\section{Long circuits}\label{sec:edge}
In this section we show that {\sc Long Circuit} problem is \classFPT~ for directed and undirected graphs.


\defparproblem{{\textsc{At least $k$ and at most $k'$-Circuit}}}{A (directed) graph $G$ and non-negative integers $k,k'$, $k\leq k'$}{$k'$}{Does  $G$ contain a circuit with at least $k$ and at most $k'$ edges (arcs)?}

Clearly, we can solve this problem in \classFPT~ time for undirected graphs applying the algorithm by Cai and Yang~\cite{CaiY11}  for  
{\sc $r$-Circuit} for each $r\in\{k,\ldots,k'\}$. For directed graphs, 
we can use the same approach based on the color coding technique introduced by Alon, Yuster and Zwick~\cite{AlonYZ95}. For completeness, we sketch the proof here.

\begin{lemma}\label{lem:atmost}
The {\sc At least $k$ and at most $k'$-Circuit} problem can be solved 
in $2^{O(k')}\cdot nm$ expected
time and in $2^{O(k')}\cdot nm\log n$ worst-case time for (directed) graphs with $n$ vertices and $m$ edges (arcs).
\end{lemma}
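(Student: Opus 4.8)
The plan is to adapt the color-coding technique of Alon, Yuster and Zwick to the circuit-finding setting, running it for each target length $r \in \{k, \dots, k'\}$. First I would observe that a circuit with at least $k$ and at most $k'$ edges exists if and only if, for some $r$ in that range, there is a closed trail using exactly $r$ edges; since we only care about the existence of \emph{some} such circuit, it suffices to solve, for each fixed $r \le k'$, the problem of detecting a closed trail on exactly $r$ edges, and then take the disjunction over the $O(k')$ values of $r$. This costs only an $O(k')$ overhead, which is absorbed into the $2^{O(k')}$ factor.

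The core step is to reduce the exact-$r$-closed-trail problem to finding a \emph{colorful} structure under a random coloring of the edge set (rather than the vertex set, as in the classical $k$-path algorithm), since a trail may repeat vertices but not edges. Concretely, I would color $E(G)$ uniformly at random with $r$ colors; if $G$ has a closed trail with $r$ edges, then with probability at least $r!/r^r = 2^{-O(r)}$ these $r$ edges receive pairwise distinct colors. Conditioned on such a coloring, detecting a rainbow closed trail on $r$ edges can be done by dynamic programming over subsets of colors: for a fixed start vertex $s$ (or, better, over all vertices simultaneously), define $D[v, S]$ to be true iff there is a trail from $s$ to $v$ whose edge-color set is exactly $S$, and update over $|S| = 1, 2, \dots, r$ by extending along an edge $\{v, w\}$ (or arc $(v,w)$) whose color is not in $S$. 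Because the colors on the trail are all distinct, no edge is reused, so the DP correctly enumerates trails rather than arbitrary walks. Closing the trail amounts to checking $D[s, S]$ for $|S| = r$ with an incident edge of the remaining color back to $s$; iterating $s$ over all $n$ vertices and the extension over all $m$ edges gives the $nm$ factor, and the $2^r$ subsets give the $2^{O(k')}$ factor. The directed case is identical with arcs in place of edges and $D$ recording directed $(s,v)$-trails.

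To get the claimed running time I would then derandomize using an $(m, r)$-perfect hash family, as Alon--Yuster--Zwick do: such a family of size $2^{O(r)} \log m$ can be constructed so that for any $r$-subset of edges some function in the family makes it colorful, which replaces the $2^{-O(r)}$ success probability by a deterministic iteration over $2^{O(r)}\log m$ colorings, yielding the $2^{O(k')}\cdot nm \log n$ worst-case bound (using $\log m = O(\log n)$), while the randomized version runs in $2^{O(k')}\cdot nm$ expected time after repeating the random coloring $2^{O(r)}$ times. Finally I would take the disjunction over $r \in \{k,\dots,k'\}$; the extra factor $k' \le 2^{O(k')}$ is harmless.

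The main obstacle, and the point that needs the most care, is ensuring that the dynamic program genuinely counts \emph{trails} (no repeated edge) and not mere walks — this is exactly why the coloring is on edges, and why indexing $D$ by the \emph{set} of colors used (forcing distinctness) is essential; one must check that every rainbow closed walk on $r$ edges is automatically a closed trail, and conversely that every closed trail on $r$ edges becomes rainbow with the stated probability. A secondary subtlety is that we want a circuit, i.e.\ a closed trail of length $\ge 1$ that is not degenerate, so for small $r$ (e.g.\ $r = 1$ or $2$) one should handle trivial cases directly, but these are routine.
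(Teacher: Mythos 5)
Your proposal is correct and follows essentially the same route as the paper: color the edges (arcs) rather than the vertices, run a dynamic program over subsets of colors recording which vertices are reachable from a fixed start vertex by a rainbow trail (which is automatically a genuine trail since distinct colors force distinct edges), and derandomize with a perfect hash family of size $2^{O(k')}\log n$. The only cosmetic difference is that you iterate over each exact length $r\in\{k,\dots,k'\}$ with $r$ colors, whereas the paper uses a single coloring with $k'$ colors and accepts any color set of size at least $k$; both yield the same $2^{O(k')}\cdot nm$ bounds.
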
 

\begin{proof}
As the algorithms for directed and undirected graphs are basically the same, we consider here the directed case. For simplicity, we solve the decision problem, but the algorithm can be easily modified to obtain a circuit of prescribed size if it exists.

Let $G$ be a directed graph with $n$ vertices and $m$ arcs.

First, we describe the randomized algorithm. We color the arcs of $G$ by $k'$ colors $1,\ldots,k'$ uniformly at random independently from each other. 
Now we are looking for a \emph{colorful} circuit in $G$ that has at least $k$ arcs, i.e., for a circuit such that all arcs are colored by distinct colors.

To do it, we apply the dynamic programming across subsets. We choose an initial vertex $u$ and try to construct a circuit that includes $u$.  
For a set of colors $X\subseteq\{1,\ldots,k'\}$, denote by $U(X)$ the set of vertices $v\in V(G)$ such that  
there is a $(u,v)$-trail with $|X|$ edges colored by distinct colors from $X$. It is straightforward to see that $U(\emptyset)=\{u\}$. For $X\neq\emptyset$, $v\in U(X)$ if and only if $v$ has an in-neighbor $w\in N_G^-(v)$ such that $(w,v)$ is colored by a color $c\in X$ and $w\in U(X\setminus\{c\})$.
We consequently construct the sets $U(X)$ for $X$ with $1,2,\ldots, k'$ elements. We stop and return a YES-answer if $u\in U(X)$ for some $X$ of size at least $k$.   
Notice that the sets $U(X)$ can be constructed in time $O(k'2^{k'}\cdot m)$. Since we try all possibilities to select $u$, the running time is $O(k'2^{k'}\cdot mn)$.

Now we observe that for any positive number $p<1$, there is a constant $c_p$ such that after running our randomized algorithm $c_p2^{O(k')}$ times, we either get a YES-answer or can claim that with probability $p$ 
$G$ has no directed circuit with at least $k$ and at most $k'$ arcs. 

This algorithm can be derandomized by the  technique proposed by Alon, Yuster and Zwick~\cite{AlonYZ95}. To do it, we replace random colorings by a family of at most $2^{O(k')}\log n$ hash functions that can be constructed in time $2^{O(k')}\cdot m\log n$. 
\end{proof}

If we set $k'=k$, then Lemma~\ref{lem:atmost} immediately implies the following proposition. Notice that it was proved in~\cite{CaiY11} for undirected graphs.

\begin{proposition}\label{prop:exact}
The {\sc $k$-Circuit} problem can be solved 
in $2^{O(k)}\cdot nm$ expected
time and in $2^{O(k)}\cdot nm\log n$ worst-case time for (directed) graphs with $n$ vertices and $m$ edges (arcs).
\end{proposition}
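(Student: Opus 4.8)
The plan is to derive this statement as the immediate special case $k'=k$ of Lemma~\ref{lem:atmost}. A circuit has at least $k$ and at most $k$ edges (arcs) precisely when it has exactly $k$ edges (arcs), so \textsc{$k$-Circuit} is literally the restriction of \textsc{At least $k$ and at most $k'$-Circuit} to instances with $k'=k$. Hence I would simply feed the input graph $G$ together with the pair $(k,k)$ into the algorithm of Lemma~\ref{lem:atmost}; it answers YES if and only if $G$ has a circuit with exactly $k$ edges (arcs), and, as observed in the proof of that lemma, it can be modified to output such a circuit. Substituting $k'=k$ into the time bounds $2^{O(k')}\cdot nm$ (expected) and $2^{O(k')}\cdot nm\log n$ (worst case) yields exactly the claimed $2^{O(k)}\cdot nm$ and $2^{O(k)}\cdot nm\log n$.

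For completeness I would recall the mechanism inside Lemma~\ref{lem:atmost} that does the actual work: colour the edges (arcs) uniformly at random with $k$ colours, fix a start vertex $u$, and compute by dynamic programming over colour subsets $X\subseteq\{1,\ldots,k\}$ the set $U(X)$ of vertices reachable from $u$ by a trail whose $|X|$ edges use exactly the colours of $X$; success is declared when $u\in U(X)$ for some $X$ with $|X|\geq k$, i.e.\ $|X|=k$. A fixed colourful circuit on $k$ edges survives the random colouring with probability $k!/k^k=2^{-O(k)}$, so $2^{O(k)}$ independent repetitions suffice to reach any prescribed constant success probability; derandomisation via the perfect hash families of Alon, Yuster and Zwick~\cite{AlonYZ95} replaces the random colourings by $2^{O(k)}\log n$ deterministic ones, which is the source of the extra $\log n$ factor in the worst-case bound.

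I expect no real obstacle here beyond observing the trivial syntactic reduction: all of the substance — the colour-coding dynamic programme and its derandomisation — is already packaged in Lemma~\ref{lem:atmost}. In the undirected case the statement was in any event established earlier by Cai and Yang~\cite{CaiY11}, so the only novelty is that the identical bound holds verbatim for directed graphs, which the uniform treatment of Lemma~\ref{lem:atmost} delivers for free.
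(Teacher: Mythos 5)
Your proposal is correct and coincides with the paper's own argument: the paper derives Proposition~\ref{prop:exact} precisely by setting $k'=k$ in Lemma~\ref{lem:atmost} and noting that the undirected case was already established by Cai and Yang~\cite{CaiY11}. The extra recap of the colour-coding dynamic programme and its derandomisation is consistent with the proof of that lemma but not needed beyond the substitution.
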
   

Gabow and Nie in~\cite{GabowN08} considered the {\sc Long Cycle} problem:

\defparproblem{{\textsc{Long Cycle}}}{A (directed) graph $G$ and a positive integer $k$}{$k$}{Does  $G$ contain a cycle with at least $k$ edges (arcs)?}

In particular, they proved the following theorem.

\begin{theorem}[\cite{GabowN08}]\label{thm:gabow-dir}
The {\sc Long Cycle} problem can be solved 
in $2^{O(k\log k)}\cdot nm$ expected
time and in $2^{O(k\log k)}\cdot nm\log n$ worst-case time for directed graphs with $n$ vertices and $m$ arcs.
\end{theorem}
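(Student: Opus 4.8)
The plan is to reduce \textsc{Long Cycle} to a search for a short, colour-codeable ``core'' together with a polynomially checkable ``return'' connection, and then to run the colour-coding machinery already developed for Lemma~\ref{lem:atmost}. Concretely, a simple directed cycle $C$ of length at least $k$ exists if and only if there is a simple directed path $p_1,p_2,\ldots,p_k$ on exactly $k$ vertices (the \emph{core}) together with a directed $(p_k,p_1)$-path in $G-\{p_2,\ldots,p_{k-1}\}$ (the \emph{return}). One direction is immediate: pick any $k$ consecutive vertices of $C$ as the core; since $C$ is simple, the remaining arc sequence is a $(p_k,p_1)$-path meeting the core only in its endpoints, hence avoids $p_2,\ldots,p_{k-1}$. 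Conversely, concatenating a core with any return path yields a simple cycle through all of $p_1,\ldots,p_k$, of length at least $k$. The point of this reformulation is that the core has bounded size $k$, while the return part, although of unbounded length, is needed only for \emph{reachability}: the existence of a $(p_k,p_1)$-path in $G-\{p_2,\ldots,p_{k-1}\}$ is a plain connectivity query, decidable in linear time once the interior is fixed.

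First I would detect cores by colour coding. Colour the vertices of $G$ with $k$ colours uniformly at random and look for a \emph{colourful} core, i.e.\ a simple $k$-vertex path whose vertices receive pairwise distinct colours. This is done exactly as in Lemma~\ref{lem:atmost}, with the roles of arcs replaced by vertices: for a start vertex $p_1$ and a colour set $X\subseteq\{1,\ldots,k\}$ one builds the set of vertices reachable by a colourful $(p_1,\cdot)$-path using precisely the colours of $X$, by dynamic programming over the $2^{k}$ colour subsets. A fixed core of a long cycle is colourful under a random colouring with probability at least $k!/k^{k}=2^{-O(k\log k)}$, so after $2^{O(k\log k)}$ random colourings we find a colourful core with high probability whenever a cycle of length at least $k$ exists; derandomizing by an $(n,k)$-perfect hash family, as in \cite{AlonYZ95} and Lemma~\ref{lem:atmost}, replaces the random trials by $2^{O(k)}\log n$ fixed colourings and accounts for the extra $\log n$ factor in the worst-case bound. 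For each colourful core found we then test whether $p_1$ is reachable from $p_k$ in $G-\{p_2,\ldots,p_{k-1}\}$, and report a long cycle as soon as one such test succeeds.

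The hard part is that detecting a colourful core is not by itself enough: the return test depends on the \emph{actual interior vertex set} $\{p_2,\ldots,p_{k-1}\}$, whereas colour coding only remembers a core up to its colour pattern, and many different interiors collapse to the same pattern. This coupling is precisely the obstruction that makes the directed ``at least $k$'' problem harder than its exact counterpart---indeed, because a directed $(p_1,p_k)$-path on $\geq k$ vertices need not contain one on exactly $k$ vertices, one cannot simply fix the two cycle endpoints and apply plain colour coding. I would handle it by keeping, for each pair of core endpoints and each colourful pattern, a bounded collection of candidate interior sets that is \emph{representative} with respect to the reachability query, so that if some realizable interior admits a return path then a retained one does as well; the number of such representatives one must store, together with the $k!/k^{k}$ rainbow probability, is exactly what inflates the running time to the claimed $2^{O(k\log k)}\cdot nm$ (expected) and $2^{O(k\log k)}\cdot nm\log n$ (worst case), rather than the $2^{O(k)}$ one gets for fixed-length circuits in Lemma~\ref{lem:atmost}. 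Bounding these representative collections and arguing their correct maintenance through the colour-subset dynamic programming is the step I expect to require the most care.
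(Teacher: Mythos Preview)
This theorem is not proved in the paper; it is quoted from Gabow and Nie~\cite{GabowN08} and invoked as a black box in the proof of Theorem~\ref{thm:fpt-edge}. There is therefore no proof in the paper to compare your proposal against.

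On the merits of your outline itself: the core--return decomposition in your first paragraph is correct and is indeed the starting point of the Gabow--Nie algorithm. The difficulty you flag in the last paragraph is also exactly the right one. But what you offer to resolve it---``a bounded collection of candidate interior sets that is representative with respect to the reachability query''---is a genuine gap, not a routine detail. The standard representative-families lemma bounds a family of $a$-sets with respect to disjointness from $b$-sets by $\binom{a+b}{a}$; here $a=k-2$ is fine, but the ``query'' is the vertex set of the return path, whose size $b$ is unbounded, so the lemma yields nothing. You neither define what ``representative with respect to reachability'' means nor argue why any function of $k$ alone bounds such a family, and you explicitly defer this step. As written, the proposal does not establish the claimed running time.

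For orientation, Gabow and Nie do not use representative families. After colouring they enumerate all $k!$ orderings of the $k$ colours (this enumeration, not the rainbow probability $k!/k^{k}$, is the source of the $2^{O(k\log k)}$ factor) and, for a fixed ordering, grow the core along that colour sequence while committing to a single canonical interior via a direct structural argument; the point is precisely to avoid ever tracking multiple interiors per pattern. Your plan would need either a new, non-standard representative-sets lemma tailored to reachability queries, or a switch to that kind of structural commitment argument.
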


Let us recall that a fundamental cycle in undirected graph is formed from a spanning tree and a nontree edge.
For  undirected graphs, it is slightly more convenient to use the structural result by Gabow and Nie. 

\begin{theorem}[\cite{GabowN08}]\label{thm:gabow-undir}
In a connected undirected graph having a cycle with $k$ edges, either
every depth-first search tree has a fundamental cycle with at least $k$ edges or some cycle with at least $k$ edges has at most $2k-4$
edges.
\end{theorem}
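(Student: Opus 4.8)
The plan is to establish Theorem~\ref{thm:gabow-undir} by a careful analysis of the longest fundamental cycle produced by a DFS traversal. First I would fix a connected undirected graph $G$ that contains a cycle with at least $k$ edges, run depth-first search to obtain a DFS tree $T$, and recall the standard property that every non-tree edge of $G$ is a \emph{back edge}, i.e.\ joins a vertex to one of its ancestors in $T$. Consequently every fundamental cycle (a back edge together with the tree path between its endpoints) is in fact a cycle of $G$, and our task reduces to showing that if \emph{no} fundamental cycle of $T$ has at least $k$ edges, then $G$ nevertheless contains a short-ish long cycle, namely one with between $k$ and $2k-4$ edges.

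The key step is to take a cycle $C$ in $G$ with $|E(C)|\ge k$ and look at how it interacts with $T$. Order the vertices of $C$ by DFS discovery time and let $v$ be the last-discovered vertex of $C$; then both edges of $C$ incident to $v$ go ``upward'' toward ancestors, and more generally one analyzes the vertex of $C$ that is an ancestor in $T$ of all other vertices of $C$, call it $w$. The cycle $C$ decomposes along $T$ into pieces, and each back edge $e$ used by $C$ spans a tree path whose length is at most $\max_e |$fundamental cycle of $e| - 1 < k-1$ by our assumption. I would then argue that $C$ can be rerouted: replace a maximal tree-path portion of $C$ by a shorter detour through $T$, or splice two overlapping fundamental cycles together, to obtain a strictly shorter cycle that is still long, and iterate this shortening as long as the current cycle has more than $2k-4$ edges. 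The arithmetic to make the bound come out to exactly $2k-4$ is the delicate part: when the shortening step removes a chunk of a tree path and patches in a back edge plus a partial fundamental cycle, one must verify that the length drops by a controlled amount so that one cannot jump from ``more than $2k-4$'' directly past the window $[k,2k-4]$; this is where the hypothesis bounding every fundamental cycle by $k-1$ edges is used to cap the size of any patch inserted.

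The main obstacle I anticipate is precisely controlling the length of the cycle after a shortening move so that it does not overshoot below $k$. The point is that a replacement can delete up to roughly $k-2$ edges of a tree path and insert up to roughly $k-2$ edges of a fundamental cycle, so a single step changes the length by at most about $k-2$ in each direction; combined with the assumption $|E(C)| \ge k$ and monotone strict decrease, one deduces that the first cycle in the process to have length at most $2k-4$ still has length at least $k$, yielding a cycle with between $k$ and $2k-4$ edges. Thus either the shortening process never needs to start (some fundamental cycle already has $\ge k$ edges) or it terminates in the desired window. I would present the fundamental-cycle decomposition and the single shortening move as the two technical lemmas, with the final statement following by induction on $|E(C)|$.
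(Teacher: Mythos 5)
First, a remark on scope: the paper does not prove Theorem~\ref{thm:gabow-undir} at all --- it is quoted from Gabow and Nie~\cite{GabowN08} and used as a black box --- so your proposal has to stand on its own rather than be measured against an in-paper argument. Your setup is correct (in undirected DFS every non-tree edge is a back edge, and the hypothesis that no fundamental cycle has $k$ edges means every back edge spans a tree path with at most $k-2$ edges), but the proof has a genuine gap exactly where the content of the theorem lies: the ``shortening move'' is never defined, and neither of the two properties you would need from it is established. (1) Splicing $C$ with a fundamental cycle, or rerouting a portion of $C$ through the tree, is a symmetric-difference operation; in general it produces a closed walk or an edge-disjoint union of several cycles, not a single cycle, and a union of cycles each of length less than $k$ is useless here. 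You give no argument that a move producing a genuine cycle, strictly shorter, exists whenever $|E(C)|>2k-4$. (Note also that ``replace a maximal tree-path portion of $C$ by a shorter detour through $T$'' is vacuous as written: the tree path between two vertices is unique, so no shorter tree detour exists.) (2) Your own arithmetic does not close the window: if a single step can decrease the length by up to $k-2$, then from a cycle of length $2k-3$ you may land at $k-1$, skipping $[k,2k-4]$ entirely. You flag this as ``the delicate part,'' but flagging it does not resolve it; one needs each step to decrease the length by at most $k-3$, and nothing in the sketch forces that.

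The constant $2k-4$ itself points at what is missing. Write $C=F_{e_1}\triangle\cdots\triangle F_{e_t}$ as the symmetric difference of the fundamental cycles of its non-tree edges, each of size at most $k-1$, and consider the partial sums $S_j=F_{e_1}\triangle\cdots\triangle F_{e_j}$, which grow from $|S_1|\le k-1$ up to $|S_t|=|C|\ge k$. If the $e_i$ can be ordered so that every $S_j$ is a single cycle and $S_{j-1}$ shares at least one edge with $F_{e_j}$, then $|S_j|\le |S_{j-1}|+|F_{e_j}|-2$, and the first $j$ with $|S_j|\ge k$ yields a cycle with between $k$ and $(k-1)+(k-1)-2=2k-4$ edges. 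That ordering lemma --- guaranteeing single cycles and a shared edge at every step, using the nesting structure of the tree paths spanned by the back edges of $C$ --- is the substantive combinatorial fact; it is exactly the ``$-2$'' that your sketch lacks, and without it (or an equivalent for your top-down shortening process) the argument does not go through.
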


We need the following observation.

\begin{lemma}\label{lem:bound-length}
Let $G$ be a (directed) graph without cycles of length at least $k$. If $G$ has a circuit with at least $k$ edges (arcs resp.), then $G$ has a circuit with at least $k$ and at most $2k-2$ edges (arcs resp.).
\end{lemma}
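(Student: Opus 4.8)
The plan is to start from an arbitrary circuit $C$ of $G$ with at least $k$ edges and to re-assemble it out of small pieces. View the edge set of $C$ as a subgraph of $G$; it is connected, and every vertex has even degree (respectively, equal in- and out-degree in the directed case). The hypothesis that $G$ has no cycle of length at least $k$ will enter only through the size of these pieces: every cycle of $G$ has at most $k-1$ edges.

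First I would invoke the standard fact that any nonempty subgraph in which every vertex has even degree --- respectively, balanced in-/out-degree --- decomposes into edge-disjoint (directed) cycles $C_1,\dots,C_t$; applied to the subgraph spanned by the edges of $C$, this gives such a decomposition with $|E(C_i)|\le k-1$ for every $i$ and $\sum_i |E(C_i)| = |E(C)|$. Next I would reorder the pieces so that every partial union $H_j := C_1\cup\cdots\cup C_j$ is connected. This is possible because the ``intersection graph'' whose nodes are the $C_i$, with $C_a$ and $C_b$ adjacent whenever they share a vertex of $G$, is connected: for any two pieces, a path in $C$ between a vertex of one and a vertex of the other traces out a walk in this intersection graph. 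Processing the $C_i$ in the order of a spanning-tree search of the intersection graph then guarantees that each new piece meets the union of the previous ones, so every $H_j$ is connected.

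Now each $H_j$ is connected and every vertex of it still has even degree (respectively is balanced), hence $H_j$ is Eulerian, i.e.\ a circuit of $G$, and $|E(H_{j+1})| - |E(H_j)| = |E(C_{j+1})| \le k-1$. Since $|E(H_1)| = |E(C_1)| \le k-1 < k$ while $|E(H_t)| = |E(C)| \ge k$, there is a smallest index $j^*$ with $|E(H_{j^*})| \ge k$; then $j^*\ge 2$, and by minimality $|E(H_{j^*-1})|\le k-1$, so $|E(H_{j^*})| = |E(H_{j^*-1})| + |E(C_{j^*})| \le 2k-2$. Thus $H_{j^*}$ is a circuit with at least $k$ and at most $2k-2$ edges, as required. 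The only step needing genuine care is the reordering --- the connectedness of the intersection graph of the cycle decomposition --- together with phrasing the decomposition lemma and the ``connected plus even-degree implies Eulerian'' fact uniformly for the directed and undirected cases; the rest is a one-line counting argument, so I do not expect a real obstacle.
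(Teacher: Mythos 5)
Your proof is correct and follows essentially the same route as the paper: decompose the circuit into edge-disjoint cycles so that every partial union is connected, then use that each cycle has at most $k-1$ edges. The only cosmetic difference is that the paper starts from a \emph{minimum-length} circuit with at least $k$ edges and bounds the last prefix, whereas you start from an arbitrary circuit and stop at the first prefix reaching $k$ edges; both yield the bound $2k-2$ by the same counting step.
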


\begin{proof}
Let $C$ be a circuit in   $G$. It is well-known (see, e.g., \cite{Fleischner90}) that $C$ is a union of edge-disjoint (arc-disjoint) cycles $C_1,\ldots,C_r$. Moreover, it can be assumed that for any $i\in\{1,\ldots,r\}$, the circuit $C_1\cup\ldots\cup C_i$ is connected. 
Suppose now that $C$ is a circuit with at least $k$ edges (arcs resp.) that has minimum length. Then the circuit  $C'=C_1\cup\ldots\cup C_{r-1}$ has at most $k-1$ edges (arcs resp.).
Since $G$ has no cycles of length at least $k$, $C_r$ has at most $k-1$ edges (arcs resp.). Thus $C$ has at most $2k-2$ edges (arcs resp.). 
\end{proof}

Now we are ready to prove the main result of this section. 

\begin{theorem}\label{thm:fpt-edge}
The {\sc Long Circuit} problem can be solved 
in $2^{O(k\log k)}\cdot nm$ expected
time and in $2^{O(k\log k)}\cdot nm\log n$ worst-case time for directed graphs with $n$ vertices and $m$ arcs, and
in $2^{O(k)}\cdot nm$ expected
time and in $2^{O(k)}\cdot nm\log n$ worst-case time for undirected graphs with $n$ vertices and $m$ edges.
\end{theorem}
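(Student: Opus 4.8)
The plan is to reduce {\sc Long Circuit} to two easier tasks: detecting a long \emph{cycle}, and detecting a circuit whose length lies in the bounded window $[k,2k-2]$. The key observations are: (i) a cycle with at least $k$ edges is itself a circuit with at least $k$ edges, so finding one lets us answer YES; and (ii) by Lemma~\ref{lem:bound-length}, if $G$ has \emph{no} cycle with at least $k$ edges but does have a circuit with at least $k$ edges, then it has a circuit with at least $k$ and at most $2k-2$ edges, which Lemma~\ref{lem:atmost} finds in $2^{O(k)}\cdot nm$ expected (resp.\ $2^{O(k)}\cdot nm\log n$ worst-case) time by setting $k'=2k-2$. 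Since a (directed) graph contains a circuit with at least $k$ edges if and only if one of its connected components does, and the numbers of vertices and edges over the components sum to $n$ and $m$ respectively, it suffices to run everything componentwise; so we may assume $G$ is connected.

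For directed graphs I would first invoke the {\sc Long Cycle} algorithm of Gabow and Nie (Theorem~\ref{thm:gabow-dir}): if it reports a directed cycle with at least $k$ arcs, answer YES; this costs $2^{O(k\log k)}\cdot nm$ expected (resp.\ $2^{O(k\log k)}\cdot nm\log n$) time. If it reports that no such cycle exists, then by observation (ii) the only way $G$ can contain a long circuit is to contain one with between $k$ and $2k-2$ arcs, so I run the {\sc At least $k$ and at most $k'$-Circuit} algorithm of Lemma~\ref{lem:atmost} with $k'=2k-2$ and return its answer. Correctness is immediate, and the running time is dominated by the $2^{O(k\log k)}$ term.

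For undirected graphs we must avoid the $k\log k$ in the exponent, so instead of the long-cycle algorithm I would use the structural Theorem~\ref{thm:gabow-undir}: compute one depth-first search tree of $G$ and inspect each of its fundamental cycles (one per non-tree edge), recording their lengths; if some fundamental cycle has at least $k$ edges, answer YES. Otherwise Theorem~\ref{thm:gabow-undir}, applied to the chosen DFS tree, guarantees that if $G$ has \emph{any} cycle with at least $k$ edges then it has a cycle --- hence a circuit --- with at least $k$ and at most $2k-4$ edges; and if $G$ has no cycle with at least $k$ edges but does have a long circuit, Lemma~\ref{lem:bound-length} again produces one with at most $2k-2$ edges. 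In either case it suffices to run Lemma~\ref{lem:atmost} with $k'=2k-2$ and return its answer. The DFS and the fundamental-cycle scan cost $O(nm)$, and Lemma~\ref{lem:atmost} costs $2^{O(k)}\cdot nm$ expected (resp.\ $2^{O(k)}\cdot nm\log n$) time, giving the claimed bound.

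The only real subtlety --- and where I expect the argument needs care --- is verifying completeness in the undirected case: one must be sure that ``no long fundamental cycle in \emph{one} DFS tree'' together with ``no circuit of length in $[k,2k-2]$'' genuinely rules out every long circuit, which is exactly what Theorem~\ref{thm:gabow-undir} (used contrapositively on the chosen tree) combined with Lemma~\ref{lem:bound-length} delivers. Everything else is bookkeeping: the componentwise reduction and checking that the per-component running times sum to the stated bounds.
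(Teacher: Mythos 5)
Your proposal is correct and follows essentially the same route as the paper's proof: Gabow--Nie's long-cycle algorithm (directed) or the structural DFS/fundamental-cycle theorem (undirected) to handle long cycles, then Lemma~\ref{lem:bound-length} to bound the remaining case to the window $[k,2k-2]$ and Lemma~\ref{lem:atmost} with $k'=2k-2$ to finish. The ``subtlety'' you flag is resolved exactly as you describe, by reading Theorem~\ref{thm:gabow-undir} contrapositively on the single chosen DFS tree.
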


\begin{proof}
First, we consider directed graphs. Let $G$ be a directed graph.  By Theorem~\ref{thm:gabow-dir}, we can check whether  $G$ has a cycle with at least $k$ arcs. If we find such a cycle $C$, then $C$ is a circuit with at least $k$ arcs, and we have a YES-answer. Otherwise, we conclude that each cycle in $G$ is of  length at most $k-1$. Then by Lemma~\ref{lem:bound-length}, if $G$ has a circuit with at least $k$ arcs, then it has a circuit with at least $k$ and at most $2k-2$ arcs. We find such a circuit (if it exist) by solving {\sc At least $k$ and at most $k'$-Circuit} for $k'=2k-2$ by making use of  Lemma~\ref{lem:atmost}. Combining the running times, we have that   {\sc Long Circuit} can be solved 
in $2^{O(k\log k)}\cdot nm$ expected time and in $2^{O(k\log k)}\cdot nm\log n$ worst-case time. 

For the undirected case, we assume that the input graph $G$ is connected, as otherwise we can solve the problem for each component. We choose a vertex $v$ arbitrarily and perform  the depth-first search from $v$. In this way we find the fundamental cycles for the dfs-tree rooted in $v$,  and check whether there is a fundamental cycle of length at least $k$.   If we have such a cycle, then it is 
a circuit with at least $k$ edges, and we have a YES-answer. Otherwise, 
by Theorem~\ref{thm:gabow-undir}, either $G$ has no cycles of length at least $k$ or $G$ has a cycle with at least $k$ and at most $2k-4$ edges. If $G$ has no cycles with at least $k$ edges, then by Lemma~\ref{lem:bound-length}, if $G$ has a circuit with at least $k$ edges, it contains  a circuit with at least $k$ and at most $2k-2$ edges. We conclude that if the constructed fundamental cycles have lengths at most $k-1$, then $G$ either has a circuit with at least $k$ and at most $2k-2$ edges or has no circuit with at least $k$ edges. We check whether $G$ has a circuit with at least $k$ and at most $2k-2$ edges  by solving {\sc At least $k$ and at most $k'$-Circuit} for $k'=2k-2$ using Lemma~\ref{lem:atmost}.
Since the depth-first search runs in linear time, we have that  on undirected graphs  {\sc Long Circuit} can be solved 
in $2^{O(k)}\cdot nm$ expected time and in $2^{O(k)}\cdot nm\log n$ worst-case time. 
\end{proof}


\begin{thebibliography}{10}

\bibitem{AlonYZ95}
{\sc N.~Alon, R.~Yuster, and U.~Zwick}, {\em Color-coding}, J. ACM, 42 (1995),
  pp.~844--856.

\bibitem{ECCC-TR03-049}
{\sc P.~Berman, M.~Karpinski, and A.~D. Scott}, {\em Approximation hardness of
  short symmetric instances of {MAX-3SAT}}, Electronic Colloquium on
  Computational Complexity (ECCC),  (2003).

\bibitem{Bodlaender96}
{\sc H.~L. Bodlaender}, {\em A linear-time algorithm for finding
  tree-decompositions of small treewidth}, SIAM J. Comput., 25 (1996),
  pp.~1305--1317.

\bibitem{BodlaenderDFH09}
{\sc H.~L. Bodlaender, R.~G. Downey, M.~R. Fellows, and D.~Hermelin}, {\em On
  problems without polynomial kernels}, J. Comput. Syst. Sci., 75 (2009),
  pp.~423--434.

\bibitem{CaiY11}
{\sc L.~Cai and B.~Yang}, {\em Parameterized complexity of even/odd subgraph
  problems}, J. Discrete Algorithms, 9 (2011), pp.~231--240.

\bibitem{Courcelle92}
{\sc B.~Courcelle}, {\em The monadic second-order logic of graphs III:
  tree-decompositions, minor and complexity issues}, ITA, 26 (1992),
  pp.~257--286.

\bibitem{CyganMPPS11}
{\sc M.~Cygan, D.~Marx, M.~Pilipczuk, M.~Pilipczuk, and I.~Schlotter}, {\em
  Parameterized complexity of {E}ulerian deletion problems}, in WG 2011,
  vol.~6986 of Lecture Notes Comp. Sci., Springer, 2011, pp.~131--142.

\bibitem{DornMNW13}
{\sc F.~Dorn, H.~Moser, R.~Niedermeier, and M.~Weller}, {\em Efficient
  algorithms for eulerian extension and rural postman}, SIAM J. Discrete
  Math.,, 27 (2013), pp.~75--94.

\bibitem{DowneyF99}
{\sc R.~G. Downey and M.~R. Fellows}, {\em Parameterized complexity},
  Monographs in Computer Science, Springer-Verlag, New York, 1999.

\bibitem{ErdosS35}
{\sc P.~Erd{\"o}s and G.~Szekeres}, {\em A combinatorial problem in geometry},
  Compositio Math., 2 (1935), pp.~463--470.

\bibitem{FellowsHRV09}
{\sc M.~R. Fellows, D.~Hermelin, F.~A. Rosamond, and S.~Vialette}, {\em On the
  parameterized complexity of multiple-interval graph problems}, Theor. Comput.
  Sci., 410 (2009), pp.~53--61.

\bibitem{Fleischner90}
{\sc H.~Fleischner}, {\em {Eulerian} Graphs and Related Topics, Part 1, Volume
  1}, Annals of Discrete Mathematics 45, Amsterdam, 1990.

\bibitem{FlumG06}
{\sc J.~Flum and M.~Grohe}, {\em Parameterized complexity theory}, Texts in
  Theoretical Computer Science. An EATCS Series, Springer-Verlag, Berlin, 2006.

\bibitem{FominG12}
{\sc F.~V. Fomin and P.~A. Golovach}, {\em Parameterized complexity of
  connected even/odd subgraph problems}, in STACS 2012, vol.~14 of LIPIcs,
  Schloss Dagstuhl - Leibniz-Zentrum fuer Informatik, 2012, pp.~432--440.

\bibitem{GabowN08}
{\sc H.~N. Gabow and S.~Nie}, {\em Finding a long directed cycle}, ACM
  Transactions on Algorithms, 4 (2008).

\bibitem{GareyJT76}
{\sc M.~R. Garey, D.~S. Johnson, and R.~E. Tarjan}, {\em The planar hamiltonian
  circuit problem is np-complete}, SIAM J. Comput., 5 (1976), pp.~704--714.

\bibitem{GolovachT11}
{\sc P.~A. Golovach and D.~M. Thilikos}, {\em Paths of bounded length and their
  cuts: Parameterized complexity and algorithms}, Discrete Optimization, 8
  (2011), pp.~72--86.

\bibitem{KosowskiLNS12}
{\sc A.~Kosowski, B.~Li, N.~Nisse, and K.~Suchan}, {\em k-chordal graphs: From
  cops and robber to compact routing via treewidth}, in ICALP 2012, vol.~7392
  of Lecture Notes in Computer Science, Springer, 2012, pp.~610--622.

\bibitem{Niedermeierbook06}
{\sc R.~Niedermeier}, {\em Invitation to fixed-parameter algorithms}, vol.~31
  of Oxford Lecture Series in Mathematics and its Applications, Oxford
  University Press, Oxford, 2006.

\end{thebibliography}

\end{document}